\newcommand{\skipit}[1]{}
\newcommand\inl[1]{ {\bf inl}\ #1}
\newcommand\inr[1]{ {\bf inr}\ #1}
\newcommand\genprincipal[2]{#1(#2)}
\newcommand\principal[1]{\genprincipal{evd}{#1}}
\newcommand\prin[1]{ {\it principal}(#1)}
\newcommand\newprincipal[1]{evd(#1)}
\newcommand\lam[2]{ \lambda#1.#2}
\newcommand\myall[2]{ \forall#1.\, #2}
\newcommand\uall[3]{ \forall[#1:\!#2].\, #3}
\newcommand\myex[2]{ \exists#1.\, #2}
\newcommand\isect[2]{ \bigcap_{#1} #2}
\newcommand\cbvapp[2]{ (#1)({\bf val}\, #2)}
\newcommand\cbvap[2]{ #1({\bf val}\, #2)}
\newcommand\cbvpair[2]{ \langle{\bf val}\, #1,#2\rangle}
\newcommand\subst[3]{#1[#2:= #3]}
\newcommand\ite[3]{ ({\bf if}\ #1\ {\bf then}\ #2\ {\bf else}\ #3)}
\newcommand\decide[3]{ {\bf decide}(#1;#2;#3)}
\newcommand\spread[2]{ {\bf spread}(#1;#2)}
\newcommand\conts[2]{ \{#1\}\subseteq #2}
\newcommand\mypair[2]{ \langle #1,#2\rangle}
\newtheorem{mydef}{Definition}
\newtheorem{mylemma}{Lemma}
\newtheorem{mythm}{Theorem}
\newtheorem{mycor}{Corollary}
\newtheorem{myexamp}{Example}
\newtheorem{myfact}{Observation}
\newtheorem{myassum}{Assumption}
\newcommand\Lang{\mathcal{L}}
\newcommand\Fml{\mathcal{F}(\Lang)}
\newcommand\Struct{\mathcal{S}(\Lang)}
\newcommand\MFml{\mathcal{MF}(\Lang)}
\newcommand\Fls{\it False}
\newcommand\Var{\it Var}
\newcommand\Void{\it Void}
\newcommand\Term{\it Term}
\newcommand\Unit{\it Unit}
\newcommand\const[1]{ {\bf constant}(#1)}
\newcommand\nat{\mathbb{N}}
\newcommand\stuck{ {\bf abort}}
\newcommand\graph[1]{ {\bf graph}(#1)}
\newcommand\lookup[1]{ {\bf lookup}(#1)}
\newcommand\nsub[1]{\nat_{#1}}
\newcommand\depfun[3]{#1:\!#2 \rightarrow #3}
\newcommand\depprod[3]{#1:\!#2 \times #3}
\begin{document}

\title{
\textbf{Intuitionistic Completeness of First-Order Logic} }

\author{Robert Constable and Mark Bickford}

\maketitle

\begin{abstract}

\noindent We establish completeness for \emph{intuitionistic
first-order logic, iFOL}, showing that a formula is provable if and
only if its embedding into minimal logic, \emph{mFOL}, is
\emph{uniformly valid} under the \emph{Brouwer Heyting Kolmogorov
(BHK)} semantics, the intended semantics of iFOL and mFOL. Our proof
is intuitionistic and provides an effective procedure $P\!r\!f$ that
converts uniform minimal evidence into a formal first-order proof.
We have implemented $P\!r\!f$. Uniform validity is defined using the
intersection operator as a universal quantifier over the domain of
discourse and atomic predicates. Formulas of \emph{iFOL} that are
uniformly valid are also intuitionistically valid, but not
conversely. Our strongest result requires the Fan Theorem; it can
also be proved classically by showing that $P\!r\!f$ terminates
using K{\"o}nig's Theorem.\\

\noindent The fundamental idea behind our completeness theorem is
that a single evidence term \emph{evd} witnesses the uniform
validity of a minimal logic formula \emph{F}. Finding even one
uniform realizer guarantees validity because $P\!r\!f(F,evd)$ builds
a first-order proof of \emph{F}, establishing its uniform
validity and providing a purely logical normalized realizer.\\

\noindent We establish completeness for \emph{iFOL} as follows.
Friedman showed that \emph{iFOL} can be embedded in minimal logic
(\emph{mFOL}) by his A-transformation, mapping formula $F$ to $F^A$.
If $F$ is uniformly valid, then so is $F^A$, and by our completeness
theorem, we can find a proof of $F^A$ in minimal logic. Then we
intuitionistically prove $F$ from $F^{False}$, i.e. by taking
$False$ for $A$ and for $\perp$ of mFOL. Our result resolves an open
question posed by Beth in 1947. \\

\end{abstract}

\section{Introduction}

\subsection{Overview}

\paragraph{\textbf{approaches to completeness}} We introduce a new
approach to completeness questions. It provides the first
intuitionistic completeness proof for the intended semantics of
intuitionistic logic, a question investigated by Beth \cite{Beth47}
starting in 1947 and open ever since.\footnote{See Troelstra
\cite{Tro91} where he states on page 12 ``The standard informal
interpretation of logical operators in intuitionistic logic is the
so-called proof-interpretation or Brouwer-Heyting-Kolmogorov
interpretation (BHK-interpretation for short).'' Brouwer proposed
several interpretations of negation (see \cite{vSt90}), so minimal
logic represents the stable intended core from which it is possible
to explain the ``ex falso quodlibet'' rule as we show.} Our result
provides an answer, however not the one expected by comparison with
G{\"o}del's completeness proof for classical first-order logic. We
briefly review previous completeness results below.

We came to our approach because we use on a daily basis the fact
that from constructive proofs of  a theorem in computational type
theory we can automatically extract programs that meet the
specification given by the theorem. These polymorphically typed
programs are evidence for validity of the theorem. For
intuitionistic first-order logic (iFOL), a subtheory of type theory,
the extracted programs are \emph{uniform} witnesses for validity of
the theorems. We call them \emph{uniform realizers}. We can express
this uniformity by a universal quantifier defined using the
\emph{intersection type} in computational type theory \cite{ABC06}.
Moreover for first-order logic we know that the realizers are not
only uniform, but they are in normal form and consist entirely of
logical operators. This is a basic fact about the extraction of
computational content (see \cite{book-full,ML82}).

In many cases we could see clearly the proof structure in the
realizers. This led us to conjecture that iFOL is complete with
respect to uniform semantics because uniformity eliminates terms
that are not essentially built from the logical operators. It was a
longer road to establish this in detail, and we report succinctly on
that journey here, giving all of the technical details. In a longer
forthcoming article we will provide more motivation, examples, and
practical applications under the proposed title \emph{Intuitionistic
Completeness of First-Order Logic with Respect to Uniform Evidence
Semantics}. There we also explicitly prove some of the basic results
about extraction in the simple setting of iFOL.

The common approach to first-order completeness is based on
systematic search for counter examples to a conjecture, and validity
of the conjecture is the reason the search fails -- halting with a
proof. This approach is well illustrated in Smullyan's enduringly
valued monograph \emph{First-Order Logic} \cite{Smu68} and Fitting's
monograph \cite{Fitt69}, both going back to the work of Beth
\cite{Beth47,Beth57}.\footnote{Beth invented \emph{semantic tableau}
as a bridge from semantics to proofs; we use uniform realizers and
their \emph{evidence structures}.} Like all other classical proofs
of completeness, these are not constructively valid. We take a very
different approach, effectively converting uniform evidence for
validity into a proof. We do this by building objects called
\emph{evidence structures} that reveal the evidence term layer by
layer. For instance, when we see evidence of the form
$\lambda(x.b(x))$ for a formula $A \Rightarrow B$, then we add to
the context of the evidence structure the assumption that $x:A$ and
continue by analyzing $b(x)$ after normalizing it by symbolic
computation. This computation reveals the operations that must be
performed on the context to expose more of the evidence term $b(x)$.
For example, if the assumption $A$ is $A_1 \& A_2$, then we
decompose $x$ into $x_1:A_1$ and $x_2:A_2$ and substitute the pair
$<x_1,x_2>$ into the logical operator mentioning $x$ in the evidence
term we are analyzing. Because the evidence is uniform, the
normalization process eliminates any operators on non-logical terms.
We can thus convert the operators on evidence terms to proof steps
in first-order minimal logic.

Our realizers are \emph{effectively computable} functions operating
on data types; we call this approach \emph{Brouwer realizability} or
\emph{evidence semantics}. We do not rely on Church's Thesis for any
of these results, and according to Kleene \cite{KV65,Tro98b}, our
use of the Fan Theorem precludes it.\footnote{The Computational Type
Theory which Nuprl implements was designed in 1984 to use an
\emph{open-ended} notion of effective computability from the start
\cite{book-full}.}

\skipit{Our method of establishing completeness with respect to
evidence semantics is useful even for logics that are incomplete,
such as Heyting Arithmetic, because our proof building procedure
will work in a practical way on uniform evidence in theories known
to be incomplete. \emph{Our method converts uniform semantic
evidence for a proposition into a proof.} Completeness by our method
will hold for any class of theorems and evidence terms for which our
basic procedure and its variants halts.}

We hope that our results will add more weight to the notion that
there is a deep connection between proving a theorem and writing a
program. We have long stressed this idea in papers treating
\emph{proofs as programs} \cite{ABC06,BC85, book-full} and
conversely \emph{programs as proofs}, additionally in papers
treating formal constructive \emph{mathematics as a programming
language} \cite{Con71, book-full} where types subsume data types.
Here we are treating iFOL as an abstract programming language where
formulas are specifications given by dependent types. We build the
proof from the program/data type which is a uniform Brouwer
realizer.

\paragraph{\textbf{intuitionistic model theory}} This article contributes
to an \emph{intuitionistic model theory} as proposed by Beth in 1947
\cite{Beth47} and greatly advanced by Per Martin-L{\"o}f \cite{ML82,
ML84}. Beth's methods led to \emph{Beth models} and \emph{Kripke
models} whose computational meaning is not as strong as in the
realizability tradition, even given Veldman's intuitionistic
completeness theorem for Kripke models \cite{Veld76}. We work in the
realizability tradition started by Kleene, developed further by
Martin-L{\"o}f, extended and implemented by the PRL Group as
reported in the book \emph{Implementing Mathematics}
\cite{book-full}, by the Coq Group as reported in \cite{BYC04}, the
Gothenberg Group reported in the book \emph{Programming in
Martin-L{\"{o}}f's Type Theory} \cite{NPS90}, the Minlog Group as
reported in Proof Theory at Work: Program Development in the
{Minlog} System \cite{BBS98}, and in numerous doctoral dissertations
and articles many of which are cited in \cite{ABC06}. This is the
tradition framing and motivating our completeness results.

The semantic tradition is grounded in precise knowledge of the
underlying computation system and its efficient implementation made
rigorous by researchers in programming languages. Our operational
semantics of evidence terms follows the method of \emph{structured
operational semantics} of Plotkin \cite{Plo77,Plo81}. The few basic
results about programming language semantics we mention can be found
in the comprehensive textbooks on the subject \cite{Mit96,Pie02}.
Many results from this theory are now being formalized in proof
assistants and applied directly to building better languages and
systems \cite{Pie11}.

\skipit{Our results can be applied in this enterprize as well and
are related to research using the Agda programming language and
prover \cite{BDN09}. Moreover, this work on semantics will make it
feasible to completely formalize the proof given here in strong
theories such as CTT and ITT augmented with a realizer for the Fan
Theorem.}

\subsection{Background}

\paragraph{Classical first-order logic, FOL} Tarski's semantics \cite{Tar38} for classical
first-order formulas faithfully captures their intuitive
\emph{truth-functional interpretation}. G{\"o}del proved his
classical completeness theorem for first-order logic with respect to
this intended semantics, showing that an FOL formula is provable if
and only if truth functionally valid. This has become a fundamental
result in logic which is widely taught to undergraduates. There are
many excellent textbook proofs such as \cite{Smu68}.

\paragraph{Intuitionistic first-order logic, iFOL}

The BHK semantics for iFOL is the intended semantics, faithful to
the intuitionistic conception of knowledge. In contrast to the
classical situation, there has been no intuitionistic completeness
proof with respect to the intended semantics.  To explain this
contrast, we look briefly at the origin of intuitionism. At nearly
the same time that a truth-functional approach to logic was being
developed by Frege \cite{Fre67} and Russell \cite{Rus08},
\emph{circa} 1907, Brouwer \cite{Bro75,vSt90} imagined a very
different meaning for mathematical statements and thus for logic
itself. Brouwer's meaning is grounded in the \emph{mental
constructions} that cause an \emph{individual mathematician} to
\emph{know} that mathematical objects can be created with certain
properties.

Brouwer developed a very rich informal model of computation in terms
of which he could interpret most concepts and theorems of
mathematics, including from set theory (see \cite{vSt90}). Brouwer's
approach anticipated a precise meaning that Church, Turing, and now
legions of computer scientists give to mathematical statements whose
meaning is grounded in computations executed by modern digital
computers. Brouwer's intuitive interpretation has come to be known
among logicians as \emph{Brouwer, Heyting, Kolmogorov (BHK)
semantics} when applied to formal intuitionistic logical calculi, as
first done by Heyting \cite{Hey34} and Kolmogorov \cite{Kol32}. In
1945 Kleene \cite{Kle45,Tro98b} invented his \emph{realizability}
semantics for intuitionistic number theory in order to connect
Brouwer's informal notion of computability to the precise theory of
partial recursive functions. He used indexes of general recursive
functions as \emph{realizers}, and by 1952 \cite{Kle52} he viewed
realizability as a formal account of BHK semantics under the
assumption of Church's Thesis.

By 1982 Martin-L{\"o}f \cite{ML82,ML84} building on the work of
Kleene refined the BHK approach and raised it to the level of a
\emph{semantic method} for constructive logics grounded in
structured operational semantics \cite{Plo81}. Martin-L{\"o}f often
referres to BHK as the \emph{propositions as types principle}. In
computer science other terminology is ``proofs as programs'' or the
``Curry-Howard isomorphism''. Already in 1970 Martin-L{\"o}f
proposed using Brouwer's analysis of bar induction as the meaning of
$\Pi^1_1$ statements and developed a constructive version of
completeness for classical first-order logic \cite{ML70} based on a
topological model of Borel sets in the Cantor Space.

This semantics plays an important role in the business of building
\emph{correct by construction} software and in the semantics of the
constructive logics such as \emph{Computational Type Theory} (CTT)
\cite{ABC06, book-full}, \emph{Intuitionistic Type Theory} (ITT)
\cite{ML82,ML84,NPS90}, \emph{Intensional-ITT}
\cite{BDN09,ML98,NPS90}, the \emph{Calculus of Inductive
Constructions} (CIC)\cite{BYC04}, \emph{Minlog} \cite{BBS98}, and
\emph{Logical Frameworks} such as Edinburgh LF \cite{HHP93}. All of
these logics are implemented by \emph{proof assistants} such as
Agda, Coq, MetaPRL, Minlog, Nuprl, and Twelfth among others.

\paragraph{Previous completeness theorems} A constructive
completeness theorem for iFOL with respect to \emph{intuitionistic
validity} is a very strong result because it says that if we know
that a formula is valid, thus true in every possible model,
\emph{then we can effectively find a first-order proof based on that
knowledge}. This seems highly unlikely as the sixty four year long
investigation of the problem has shown. In all previous work, the
idea is to try to construct a proof and use the evidence for truth
to argue that the proof construction must succeed.  Classically this
requires K{\"o}nig's Lemma, and constructively some use of Markov's
Principle or the Fan Theorem or something of that kind. Those
efforts do not try to use the information that $\forall
\mathcal{M}:Model. \models F$ to build the proof.  Nevertheless, our
results show exactly how to build the proof from \emph{uniform
evidence} for validity, which is a single object. Moreover, we can
actually execute our result using a tactic executed by the Nuprl
prover \cite{book-full, ABC06, Kre02}. We give that procedure in the
Appendix.

Over the last fifty years there have been numerous deep and
evocative efforts to formulate completeness theorems for the
intuitionistic propositional calculus and for intuitionistic
first-order logic modeled after G{\"o}del's Theorem
\cite{deS77,Dumm77,Kre62,ML70,Veld76,McC08}.  Some efforts led to
apparently more technically tractable semantic alternatives to BHK
such as \emph{Beth models} \cite{Beth57,Veld76}, \emph{Kripke
models} \cite{Kri65}, topological models
\cite{CoqSmi95,Grz64b,Tar38,RS63,ML70}, intuitionistic model
theoretic validity \cite{TvD88}, and provability logic
\cite{Art99b}. Dummett \cite{Dumm77} discusses completeness issues
extensively. The value of developing a precise mathematical
semantics for intuitionistic mathematics in the spirit of Tarski's
work dates at least from Beth 1947 \cite{Beth47} with technical
progress by 1957 \cite{Beth57}. So the completeness issue has been
identified yet unsettled for sixty four years. An important early
attempt to base completeness on BKH is the (nonconstructive) work of
L{\"a}uchli \cite{Lau70, LO96} who stressed the notion of
\emph{uniformity} as important. None of these efforts provides a
constructive completeness theorem faithful to BHK semantics (a.k.a.
Brouwer realizability) either for the intuitionistic propositional
calculus (IPC) or for the full predicate calculus. We do.

The closest correspondingly faithful constructive completeness
theorem for \emph{intuitionistic validity} is by Friedman in 1975
(presented in \cite{TvD88}), and the closest classical proof for the
Brouwer-Heyting-Kolmogorov (propositions as types/proofs as
terms/proofs as programs) semantics for intuitionistic first-order
logic is from 1998 by Artemov using \emph{provability logic}
\cite{Art99b}. Results suggest how delicate completeness theorems
are since constructive completeness with respect to full
intuitionistic validity contradicts Church's Thesis
\cite{Kre62,TvD88} and implies \emph{Markov's Principle} as well
\cite{McC96,McC08}.\footnote{Church's Thesis is not an issue for us
because we do not assume it.}

\subsection{Summary of Results}

\paragraph{Results in this article} We first review \emph{evidence semantics}.
\footnote{We can extend this semantics to classical logic if we
allow oracle computations \cite{Con85e} to justify the law of
excluded middle, $P \vee \sim \!P$, with an operator $magic(P).$ We
make some observations about classical logic based on this
\emph{classical evidence semantics}.} Using evidence semantics, we
then introduce the idea of \emph{uniform validity}, a concept
central to our results and one that is also classically meaningful.
This concept provides an effective tool for semantics because we can
establish uniform validity by \emph{exhibiting a single polymorphic
object}. For example, the propositional formula $A \Rightarrow A$ is
uniformly valid exactly when there is an object in the intersection
of all evidence types for this formula for each possible choice of
$A$ among the type of propositions, $\mathbb{P}$. We write this
intersection as $\forall [A:\mathbb{P}]. A \Rightarrow A$ or as
$\bigcap A:\mathbb{P}. A \Rightarrow A$.\footnote{We work in a
\emph{predicative} metatheory, therefore the type of all
propositions is stratified into orders or levels, written
$\mathbb{P}_i$. For these results we can ignore the level of the
type or just write $\mathbb{P}_i$.} In this case, given the
extensional equality of functions, the polymorphic identity function
$\lambda(x.x)$ is the one and only object in the intersection. So
the witness for uniform validity like the witness for provability,
can be provided by a single object.\footnote{Contrast this with the
kind of evidence need for classical or intuitionistic \emph{model
theoretic validity}. In those cases, we need a whole class of models
to witness validity of a single formula.} Truth tables do this for
classical propositional logic. Unlike for classical first-order
logic, there are single witnesses for the validity of all uniformly
valid first-order formulas; for example, it will be clear after we
provide the evidence semantics that the polymorphic term
$\lambda(h.\lambda(x.\lambda(p.h(<x,p>))))$ establishes the uniform
minimal (logic) validity of

$$\sim \exists x .P(x) \Rightarrow \forall x .(\sim \! P(x))$$
hence the uniform intuitionistic and classical validity as well.

Another important observation about uniform validity is that
\emph{the formulas of first-order logic that are provable
intuitionistically and minimally are uniformly valid.} It is also
noteworthy that \emph{the law of excluded middle is not uniformly
valid in either constructive or classical evidence semantics.}

Uniform validity also raises the semantic problem that forces us to
consider minimal logic first. Consider the intuitionistically valid
assertion $False \Rightarrow A$ for any proposition $A$. One
semantic object that witnesses uniform validity is $\lambda(x.x)$,
and other witnesses for uniform validity include any constant
functions, say $\lambda(x.17)$ or even a \emph{diverging} term such
as $div$. The claim being made is that if $x$ belongs to the
evidence type for $False$, then 17 or $div$ belongs to the evidence
type for $A$.\footnote{We can use the fixed point combinator, say
\textbf{Y} or $fix$ to define $div$. For instance,
$fix(\lambda(x.x))$ computes to itself, where $fix$ is an operator
such as the \textbf{Y} combinator
$\lambda(f.ap(\lambda(x.ap(f;ap(x;x)));\lambda(x.ap(f;ap(x;x)))))$.}
This claim is vacuously true since no element can be evidence for
$[False]$ whose evidence is the empty type. From the constant
function with value 17, $\lambda(x.17)$, we cannot reconstruct the
proof. In minimal logic, we don't have the atomic propositional
constant $False$, we use instead the \emph{arbitrary} propositional
constant \textbf{$\perp$} whose interpretation allows non-empty
types as well as empty ones. For the same reason, avoiding vacuous
hypotheses, we require that all domains of discourse for minimal
logic can be non-empty.

\paragraph{Discussion} Our results also suggest why completeness with respect to
satisfiability in all constructive models, let alone all
intuitionistic models, is unlikely (even impossible according to
McCarty \cite{McC96,McC08}); such completeness is unlikely because
we show that provability captures exactly \emph{uniform validity},
an intuitively smaller collection of formulas than those
constructively valid. Nevertheless, uniform validity is extremely
useful in practice when thinking about purely logical formulas
precisely because it corresponds exactly to proof and yet is an
entirely semantic notion based on the intended BHK semantics, the
semantics that enables strong connections to computer science.

\section{The main theorems}
\begin{mydef}
  A first order language $\Lang$ is a symbol $D$ and a finite set of relation symbols $\left\{ R_i \vert i \in I \right\}$
  with given arities $\left\{ n_i \vert i \in I \right\}$.
  First order formulas, $\Fml$, over $\Lang$ are defined as usual.
  The variables in a formula (which range over $D$) are taken from a fixed set $\Var = \left\{ d_i \vert i \in \mathbb{N} \right\}$.
  Negation $\neg \psi$ can be defined to be $\psi \Rightarrow \Fls$.
  The first order formulas of minimal logic
  \footnote{The usual definition of minimal logic includes a designated constant $\perp$ and defines weak negation as $\psi \Rightarrow \perp$.
            We merely view $\perp$ as one of the atomic relation symbols $R_i$ with arity $n_i = 0$.
           },
  $\MFml$, are the formulas in $\Fml$ that do not use either negation or $\Fls$.
\end{mydef}

  In type theory, the propositions, $\mathbb{P}$, are identified with types. A non-empty type is a true proposition and members of
  the type are the evidence for the truth of the proposition. An empty type provides no evidence and represents a false proposition.

\begin{mydef}
  A structure $M$ for $\Lang$ is a mapping that assigns to $D$ a type $M(D)$ and assigns to each $R_i$ a term of type
  $M(D)^{n_i} \rightarrow \mathbb{P}$.
  We write $\Struct$ for the type\footnote{Since we work in type theory we always use types rather than sets.} of structures for $\Lang$.
  If $M \in \Struct$ and $x \in M(D)$ then $\subst{M}{d}{x}$ is an extended structure that maps the variable $d$ to the term $x$.
\end{mydef}

\begin{mydef}
  \label{model1}
  Given $M \in \Struct$ that has been extended to map the variables $V_0 \subseteq \Var$
  into $M(D)$, we extend the mapping $M$ to all formulas in $\Fml$ with free variables in $V_0$ by:
  \begin{eqnarray*}
    M(\Fls) &=& \Void\\
    M(R_i(v_1,\dots,v_{n_i})) &=& M(R_i)(M(v_1),\dots, M(v_{n_i})) \\
    M(\psi_1 \wedge \psi_2) &=&  M(\psi_1) \times M(\psi_2) \\
    M(\psi_1 \vee \psi_2) &=&  M(\psi_1) + M(\psi_2) \\
    M(\psi_1 \Rightarrow \psi_2) &=&  M(\psi_1) \rightarrow M(\psi_2) \\
    M(\neg\psi) &=&  M(\psi \Rightarrow \Fls) \\
    M(\myall{v}{\psi}) &=&  \depfun{x}{D}{(\subst{M}{v}{x})(\psi)} \\
    M(\myex{v}{\psi}) &=&  \depprod{x}{D}{(\subst{M}{v}{x})(\psi) }
  \end{eqnarray*}
  Thus, any $M \in \Struct$ assigns a type $M(\psi)$ to a
  sentence (a formula with no free variables) $\psi \in \Fml$. $M(\psi)$ is
  synonymous with the proposition $M \models \psi$, and the members of type $M(\psi)$ are the
  evidence for $M \models \psi$.
\end{mydef}
\begin{mydef}
  A sentence $\psi \in \Fml$ is valid if \[\myall{M \in \Struct}{M \models \psi}\]
  Evidence for the validity of $\psi$ is a function of type $M:\Struct \rightarrow M(\psi)$
  that computes, for each $M \in \Struct$, evidence for $M \models \psi$.

  A sentence $\psi \in \Fml$ is uniformly valid if there is one term that is a member
  of all the types $M(\psi)$ for $M \in \Struct$. Such a term is a member of the
  intersection type \[\isect{M\in\Struct}{M(\psi)}\] We write an intersection type $\isect{x\in T}{P(x)}$ as
  a proposition using the notation $\uall{x}{T}{P(x)}$. The square brackets indicate that evidence for the
  proposition  $\uall{x}{T}{P(x)}$ is uniform and does not depend on the choice of $x$.

  To summarize:
  \begin{eqnarray*}
    \psi\ {\it is\ valid} &\equiv& \myall{M \in \Struct}{M \models \psi}\\
    \psi\ {\it is\ uniformly\ valid}(\psi) &\equiv& \uall{M}{\Struct}{M \models \psi}
  \end{eqnarray*}
\end{mydef}
We write $\vdash_{IL} \psi$ to say that there is a proof of $\psi$
in intuitionistic logic and $\vdash_{ML} \psi$ to say that there is
a proof of $\psi$ in minimal logic. From a proof in intuitionistic
logic of any proposition we can construct evidence for the
proposition. Automated proof assistants like
Agda,Coq,MetaPRL,Minlog, and Nuprl can construct the evidence
automatically. We observe, and can easily prove, that the evidence
constructed from an intuitionistic proof of a first order formula
$\psi \in \Fml$ is actually evidence that $\psi$ is uniformly valid.
Our main theorem states that for formulas of minimal logic the
converse is also true: a uniformly valid formula is provable.
\begin{mythm}
  \label{thm:min}
  For any $\psi \in \MFml$,
  \[ \uall{M}{\Struct}{M \models \psi} \Leftrightarrow\ \  \vdash_{ML}
  \psi . \]
\end{mythm}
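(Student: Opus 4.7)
The plan is to prove the two directions separately. The soundness direction ($\vdash_{ML} \psi$ implies $\psi$ is uniformly valid) is a routine induction on the length of the minimal logic proof: each inference rule corresponds to a purely logical construction on evidence terms that depends only on the shape of the subformulas and not on any structure $M$, so the realizer extracted from a proof automatically lies in the intersection $\isect{M\in\Struct}{M(\psi)}$. We would dispatch this direction first, partly because it fixes the exact correspondence between proof rules and evidence constructors that the harder direction must invert.

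The completeness direction is the main content, and we would approach it through an effective procedure $Prf(\psi,evd)$ that, given a uniform realizer $evd \in \isect{M\in\Struct}{M(\psi)}$, outputs a derivation $\vdash_{ML} \psi$. Following the sketch in the introduction, we would organize the construction around \emph{evidence structures}: triples $(\Gamma;\,G;\,e)$ consisting of a hypothesis context $\Gamma$, a goal $G \in \MFml$, and an evidence term $e$ that uniformly realizes $G$ under $\Gamma$. The procedure refines such a structure by case analysis on the principal connective of $G$ together with the head normal form of $e$, selecting a proof rule accordingly. For an implication goal $A \Rightarrow B$ with evidence $\lambda x.\,b(x)$, we add the hypothesis $x{:}A$ and recurse on $(\Gamma,x{:}A;\,B;\,b(x))$; for conjunction and universal goals we split the pair or the lambda symmetrically; for disjunction and existential goals the outer \textbf{inl}/\textbf{inr} or pair dictates the introduction rule. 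When a compound hypothesis appears whose associated variable is destructed inside $e$, we invoke the matching elimination rule, substitute the resulting subterms back into $e$, and renormalize. Uniformity keeps the invariant honest: because $evd$ realizes $G$ for \emph{every} choice of $M(D)$ and $M(R_i)$, no subcomputation of $e$ can legitimately branch on a specific inhabitant of $M(D)$ or on the value returned by some $M(R_i)$, so after head-reduction the outermost symbol of $e$ must be a logical constructor or a projection from $\Gamma$, precisely the data needed to fire a minimal-logic rule.

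The critical and most delicate step is termination of this refinement, since without it no finite proof is produced. We would view the collection of evidence structures generated by $Prf$ as a tree, argue that it is finitely branching (at each node there are only finitely many hypotheses, logical forms, and head shapes to consider), and show that every infinite branch would force $evd$ to depend on data proper to some particular $M$, contradicting membership in $\isect{M\in\Struct}{M(\psi)}$. Invoking the Fan Theorem on this finitely branching, bar-inductive tree then yields a uniform bound on the recursion depth, after which the minimal logic derivation is assembled bottom-up by reading off the rule choices at each node; classically the same termination is obtained from K\"onig's Lemma. The hardest obstacle will be pinning down the precise invariant on evidence structures that makes ``uniformity is preserved under refinement'' constructively provable and compatible with symbolic normalization, especially in the case where the goal is atomic and the head of $e$ is an eliminator applied to a hypothesis of non-trivial logical shape: there the substitution-and-renormalize step must be shown to strictly decrease a well-founded measure on the evidence term relative to the evolving context, which is what ultimately justifies the application of the Fan Theorem.
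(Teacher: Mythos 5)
Your overall architecture coincides with the paper's: the same split into a routine soundness induction and a completeness procedure that refines triples (context, goal, evidence) by case analysis on the head form of the evidence term, with termination in the general case obtained by barring a finitely branching tree and invoking the Fan Theorem (K\"onig's Lemma classically), and a well-founded measure handling the already-normalized case. The gap is in the two places you leave open, and the first of them you have in effect flagged yourself without closing. The invariant ``$e$ uniformly realizes $G$ under $\Gamma$'' is not the one that can be maintained constructively: membership $e \in M(G)$ is not a decidable proposition, and a uniform realizer may contain untypable ``dead code,'' so the subterms exposed by refinement need not be genuine realizers of their goals. The paper's resolution is to interpret contexts only in \emph{finitary} structures (finite value types that are retracts), to replace ``realizes'' by ``$M(e)$ is \emph{consistent with} $M(G)$,'' meaning the retraction $i_{M(G)}(M(e))$ converges, and to quantify only over \emph{tight} models of the context. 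Strictness and tightness are exactly what let consistency be pushed through eliminators (if $i_{M(G)}(f(t))$ converges and $M(f)$ is tight, then $t$ is consistent with the domain type), and the lemma that every context has a tight finitary model is what proves your assertion that after head reduction the evidence must match a rule. Without some such substitute, that assertion has no constructive proof.

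The second gap is that your contexts carry only typing declarations, whereas the elimination cases for function-typed hypotheses force the context to also carry \emph{constraints}. When the principal subterm is $f(t)$ with hypothesis $f : A \Rightarrow B$, the procedure introduces a fresh $v : B$ and must record the equation that $f$ is the constant function with value $v$; when $f : \myall{z}{P}$ is applied to a domain variable $d$, it must record $f(d) = w$ for a fresh $w : P(d)$. These unique, stratified constraints are needed (i) so that a tight model of the \emph{extended} context can still be constructed --- one chooses $\lambda x.\,w$ or a finite lookup table, which is what validates the second premise of the implication-elimination step --- and (ii) so that repeated applications of the same hypothesis are short-circuited by a lookup rather than generating fresh subgoals, which is what makes the constant-domain extensions of a context finite and hence makes the tree your Fan Theorem argument bars actually finitely branching. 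As written, your proposal has no mechanism tying the behavior of a hypothesis function in the model to the fresh variables the proof introduces for its values, and both the model-existence lemma and the termination bound fail without it.
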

Using Friedman's $A$-transformation \cite{Lei85}, we can derive from
Theorem~\ref{thm:main} a corresponding completeness theorem for
intuitionistic logic.
\begin{mycor}
  \label{thm:main}
  For any $\psi \in \Fml$,
  \[ \uall{M}{\Struct}{M \models \psi^A} \Leftrightarrow\ \
\vdash_{IL} \psi \]
\end{mycor}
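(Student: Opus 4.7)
The plan is to combine Theorem~\ref{thm:min} with Friedman's $A$-translation, treating it as a bridge between intuitionistic and minimal logic. Concretely, since Friedman's translation replaces every occurrence of $\Fls$ by the atomic predicate symbol $A$ and propagates $A$ as a disjunct through atomic formulas and existentials, $\psi^A$ contains no $\Fls$ and no negation. Thus $\psi^A \in \MFml$, and Theorem~\ref{thm:min} is directly applicable whenever we have uniform validity of $\psi^A$.

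For the forward direction ($\Rightarrow$), assume $\uall{M}{\Struct}{M \models \psi^A}$. By Theorem~\ref{thm:min} we obtain $\vdash_{ML} \psi^A$. Now I would apply a substitution lemma: since $A$ is a $0$-ary relation symbol of $\Lang$, uniform substitution of the formula $\Fls$ for every occurrence of $A$ transforms a minimal logic derivation into a valid intuitionistic derivation (no rule of minimal logic mentions $\Fls$, and $\Fls$ is a legal iFOL formula). This yields $\vdash_{IL} \psi^{\Fls}$. The final step is the standard Friedman lemma stating that $\psi^{\Fls}$ and $\psi$ are intuitionistically equivalent: I would prove this by induction on $\psi$, using $\Fls \vee \phi \dashv\vdash_{IL} \phi$ at atomic and existential clauses, and using ex falso to collapse the $\Fls$ disjuncts introduced at implications. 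This step is where ex falso quodlibet of iFOL is essential and is the whole point of moving from mFOL back to iFOL.

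For the reverse direction ($\Leftarrow$), assume $\vdash_{IL} \psi$. I would proceed by induction on the iFOL derivation, producing, rule by rule, a single closed evidence term that realizes $\psi^A$ in every structure $M \in \Struct$. The base cases use the clauses of Friedman's translation: axioms of iFOL (including ex falso) go to mFOL theorems about $A$, which have uniform realizers built purely from projections, pairing, injections, case-analysis, $\lambda$-abstraction, and application. Because none of these term formers inspects $M$, the resulting realizer lies in $\isect{M\in\Struct}{M(\psi^A)}$, giving uniform validity.

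The main obstacle is the substitution/equivalence step in the forward direction: verifying constructively that $\psi \dashv\vdash_{IL} \psi^{\Fls}$ for arbitrary $\psi \in \Fml$, and in particular checking that the inductive case for implication goes through (the Friedman clause $(\phi_1 \Rightarrow \phi_2)^A \equiv \phi_1^A \Rightarrow \phi_2^A$ interacts with ex falso rather than with any property of $A$, so the collapse to $\psi$ must be traced carefully). Everything else is routine: Theorem~\ref{thm:min} supplies the real content, substitution of $\Fls$ for an atomic constant in a proof is a syntactic rewrite, and the realizability of iFOL proofs as uniform evidence is the standard extraction observation already cited in the Summary of Results.
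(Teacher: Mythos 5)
Your proposal matches the paper's proof: both reduce the corollary to Theorem~\ref{thm:min} (noting $\psi^A \in \MFml$) together with the equivalence $\vdash_{ML} \psi^A \Leftrightarrow\ \vdash_{IL} \psi$, and both prove the forward half by specializing $A$ to $\Fls$ and then showing $\psi^{\Fls} \dashv\vdash_{IL} \psi$ by induction on $\psi$. The only divergence is cosmetic: for the converse the paper simply cites Friedman's Theorem and then invokes the easy direction of Theorem~\ref{thm:min}, whereas you inline that argument as an induction on the iFOL derivation that constructs the uniform realizer for $\psi^A$ directly.
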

\begin{proof}
  By Theorem~\ref{thm:min} it is enough to show
  \[ \vdash_{ML} \psi^A \Leftrightarrow\ \  \vdash_{IL} \psi \]
  ($\Rightarrow$) If $\vdash_{ML} \psi^A$ then also $\vdash_{IL}
\psi^A$ for any interpretation of $A$ including ${\bf False}$. It is
easy to
  prove, by induction on the structure of $\psi$ that $\vdash_{IL}
\psi^{\bf False} \Leftrightarrow \psi$.

  ($\Leftarrow$) This is Friedman's Theorem.
\end{proof}
We will prove Theorem~\ref{thm:min} by defining
  an effective procedure that builds a tree of {\em evidence structures} (defined below) starting with an initial
  evidence structure formed from the uniform evidence term. We show that any evidence structure is either trivial (and therefore
  a leaf of the ultimate minimal logic proof) or else can be transformed into a finite number (either one or two) of derived
  evidence structures, and the transformation tells us what rule of minimal logic to use at that step of the proof.

Theorem~\ref{thm:min} will then follow from the fact that this
effective procedure must terminate and yield a finite proof tree.
The termination of the procedure for an arbitrary term $evd \in
\isect{M\in\Struct}{M(\psi)}$ is a strong claim. The evidence need
not be a fully-typed term with all of its subterms typed, so there
can be sections of ``dead code'' in the evidence that are not
typable and may not be normalizable. Nevertheless the fact that the
evidence is uniformly in the type $M(\psi)$ implies that the ``dead
code'' is irrelevant and our procedure will terminate, but the proof
of this fact (which follows in classical logic from K{\"o}nig's
lemma) in intuitionistic mathematics seems to require Brouwer's Fan
Theorem.

If we assume that the uniform evidence term is fully normalized,
then we can make a direct inductive argument for termination of our
proof procedure. Since the evidence constructed from a proof in
minimal logic is fully normalizable, this results in an alternate
version of completeness that we state as follows
\begin{mythm}
  \label{thm:normal}
  Any $\psi \in \MFml$ is provable in minimal logic ($\vdash_{ML} \psi$) if and only if there is a fully normalized term
  $evd$ in the type $\isect{M\in\Struct}{M(\psi)}.$
\end{mythm}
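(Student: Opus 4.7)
The plan is to split the biconditional into two halves. The forward direction is essentially Curry--Howard extraction of computational content from a minimal logic proof, while the backward direction runs the effective procedure $P\!r\!f$ referenced in the introduction, specialized to the case where the input evidence is already in normal form so that no appeal to the Fan Theorem is required.

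For the forward direction, one proceeds by induction on the given derivation $\vdash_{ML} \psi$, associating to every inference rule the corresponding constructor on evidence terms: $\Rightarrow$-introduction yields $\lambda x.\, b$, $\wedge$-introduction yields $\mypair{a}{b}$, $\vee$-introduction yields $\inl{a}$ or $\inr{a}$, the quantifier introductions yield $\lambda$-abstractions and dependent pairs, and the elimination rules yield applications, $\spread{}{}$, and $\decide{}{}{}$. None of these constructors mentions any non-logical symbol, so the extracted term is uniformly a member of $M(\psi)$ for every $M \in \Struct$, and hence of the intersection $\isect{M\in\Struct}{M(\psi)}$. Any cut-style detours introduced by elimination rules can be normalized away by the standard reduction steps, which preserve membership in the type by subject reduction, yielding a fully normalized realizer as required.

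For the backward direction, the basic object is an \emph{evidence structure} $(\Gamma, \psi, e)$ consisting of a context of typed hypotheses, the current goal formula, and a term $e$ realizing $\psi$ under $\Gamma$; the initial structure is $(\emptyset, \psi, evd)$. The procedure inspects $e$ and $\psi$ and emits one or two strictly smaller successor structures, reading off the matching rule of minimal logic. When $\psi$ is compound, normality of $e$ forces its head to be the corresponding introduction, so the rule is $\wedge$-I, $\vee$-I, $\Rightarrow$-I, $\forall$-I, or $\exists$-I, and recursion passes to the strictly smaller immediate subterms with $\Gamma$ possibly extended by a fresh hypothesis. When $\psi$ is atomic, normality forces $e$ to be a neutral term headed by some hypothesis variable $h \in \Gamma$: either $e = h$ (the axiom rule, giving a leaf of the proof tree), or $e$ begins with an elimination such as $h\, a$, $\spread{h}{x,y.\,e'}$, or $\decide{h}{x.\,e_1}{y.\,e_2}$, matching $\Rightarrow$-E (or $\forall$-E), $\wedge$-E (or $\exists$-E), or $\vee$-E respectively. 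Replacing $h$ in $\Gamma$ by the components exposed by the elimination yields the context used in the recursive calls.

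The step I expect to be the main obstacle is justifying this dispatch and its termination when $\psi$ is atomic. The key lemma to be established is that, precisely because $evd$ is \emph{fully normalized} and uniformly inhabits $M(\psi)$ for every $M \in \Struct$, the head of its normal form cannot be an introduction when the goal is atomic: arbitrary choices of denotation for the atomic predicates rule out any accidental inhabitation of a base type by a $\lambda$-abstraction, pair, or injection, so such a term would fail to lie in the intersection. Granted this lemma, a lexicographic measure on the pair consisting of the size of the normal form of $e$ and the multiset of sizes of formulas in $\Gamma$ strictly decreases at every transformation step, so the procedure halts with a finite tree whose leaves are axiom applications. Reading this tree bottom-up assembles the required derivation in minimal logic, completing the proof.
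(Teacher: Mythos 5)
Your overall architecture matches the paper's: extraction of a normalized realizer for the forward direction, and for the backward direction a recursive procedure on evidence structures that terminates by a well-founded measure with no appeal to the Fan Theorem. But there is a genuine gap in the central dispatch. You claim that when the goal $\psi$ is compound, normality of $e$ forces its head to be the corresponding introduction, and you reserve the variable/neutral cases for atomic goals. That dichotomy is false for $\beta$-normal (as opposed to $\eta$-long) terms: $\lambda x.\,x$ is a fully normalized uniform realizer of $(A\wedge B)\Rightarrow(A\wedge B)$, and after the $\Rightarrow$-introduction step the goal $A\wedge B$ is compound while the evidence is the bare variable $x$; likewise $\lambda f.\lambda a.\spread{f(a)}{x,y.\mypair{y}{x}}$ leaves a compound goal $C\wedge B$ realized by a neutral term. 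Your procedure as described has no applicable case at such a node and would get stuck. The paper instead dispatches on the form of the \emph{evidence}, not the goal: canonical evidence (pair, injection, $\lambda$) triggers an introduction rule, a variable triggers the hypothesis rule at any goal, and evidence with a principal variable triggers the elimination rule determined by the operator applied to that variable. The lemma you correctly identify as the crux (that canonical evidence cannot uniformly realize an atomic goal, because the atomic predicates can be interpreted freely) is then used to show that one of these sixteen cases always matches, not to drive a goal-directed case split.

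A second, smaller problem is the termination measure. The elimination steps do not merely descend to subterms; to keep the context and the remaining evidence consistent they substitute a pattern ($\mypair{x}{y}$, $\inl{x}$, or $\inr{y}$) for the eliminated hypothesis variable throughout the evidence, and if that variable occurs several times the raw size of the term increases. The paper therefore uses a lexicographic tuple counting occurrences of {\bf spread}, {\bf decide}, and {\bf apply} first and pair/injection constructors last, so that an elimination strictly decreases the dominant count while the substituted patterns inflate only the subordinate ones. Your measure, the size of the normal form of $e$, does not obviously decrease under these substitutions, and your secondary component (the multiset of formula sizes in $\Gamma$) grows under $\Rightarrow$-elimination, which keeps the hypothesis $f:A\Rightarrow B$ while adding $v:B$. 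The measure needs to be replaced by something like the paper's operator counts, together with the check that re-normalizing after a pattern substitution only creates {\bf spread}/{\bf decide}/{\bf apply} redexes whose reduction does not increase the measure.
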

We work only in intuitionistic logic, so we must avoid the use of
excluded middle for propositions that are not decidable and in
particular we can not assume the proposition $evd \in M(\psi)$ is
decidable. Because of this, we will need the concept that evidence
term $evd$ is {\em consistent with} the type $M(G)$. One notion of
consistency that is sufficent for our proof is that there is no
structure $M$ for which $evd \not\in M(G)$. However, the resulting
proof is logically more complex than the one we give below where
consistency is based on interpeting the types in {\em finitary}
structures.

\section{Finitary types and structures}
\begin{mydef}
  Types $A$ and $B$ are {\em equipollent} (written $A \sim B$) if there is a bijection $f: A \rightarrow B$.
  A type $T$ is {\em finite} if $\myex{k:\nat}{T \sim \nsub{k}}$ (where $\nsub{k}$ is the
  type of numbers in the range $0 \leq i < k$).
\end{mydef}
Note that if $T$ is finite then equality in $T$ is decidable and there is a list $L_T$ that enumerates $T$, i.e. contains all
the members of $T$ with no repeats. Using $L_T$, any function $f : T \rightarrow S$ can be converted to a table
\[\graph{f} = {\bf map}(\lambda x. \mypair{x}{f(x)}, L_T) \]
Using the decidable equality in $T$ we can define a table lookup function and recover the
function \[f = \lookup{\graph{f}}\]
\begin{mydef}
  We write $t\downarrow$ to say that term $t$ computes to a value (a canonical form).

  The {\em bar type} $\bar{T}$ is the type of all terms $t$  such that $(t\downarrow) \Rightarrow (t \in T)$.

  A function $f : \Term \rightarrow \bar{T}$ is {\em strict} if for all terms $t$
  \[(f(t)\downarrow) \Rightarrow (t\downarrow) \]

  A type $T$ is a {\em value type} if every member of $T$ converges to a value.
\end{mydef}
A bar type $\bar{T}$ is not a value type, but even without bar types, a rich type theory that includes intersection types or
quotient types will have some types that are not value types.

\begin{mydef}
  A type $T$ is a {\em retract} if there is a strict function $i_T$ of type $\Term \rightarrow \bar{T}$ such that
  \[\myall{t:T}{i_T(t) = t \in T}\] or equivalently \[i_T = id \in (T \rightarrow T)\]

  A type $T$ is {\em finitary} if it is a finite, value type and a retract.

  A structure $M \in \Struct$ is {\em finitary} if $M(D)$ is finitary and the types
    $M(R_i)(d_1,\dots, d_{n_i})$ assigned to the atomic formulas are finitary.
\end{mydef}
We let $\stuck$ be a fixed term that has no redex but is not a
canonical form. For example $\stuck$ could be $0(0)$ or $true + 5$
or a primitive. The term $\stuck$ does not converge to a value. We
use this to construct simple examples of finitary types.
\begin{myexamp}
The type $\nsub{k}$ is a finitary type. The retraction $i_{\nsub{k}}$ is
\[\lambda t.  \ite {0 \leq t\ \&\ t < k}{t}{\stuck}. \]
\end{myexamp}
\begin{myexamp}
  The type $\Unit$ with a single canonical member $\star$ is a finitary type. The retraction is
  \[\lambda t.  \ite {t == \star}{t}{\stuck}. \]
\end{myexamp}
These examples depend on the existence of primitive computations
that recognize the canonical forms of the intended members of the
type. We mention here some additional assumptions about the
underlying computation system on which our proof of completeness
depends. These assumptions are satisfied by the computation system
used by Nuprl, but our proof could easily be modified to work for
type theories based on different primitive computations.
\footnote{For example, the computation system could have primitive
projection functions $\pi_1$ and $\pi_2$ rather than the ${\bf
spread}$ primitive. It could have primitives for ${\bf isl}$,  ${\bf
outl}$, ${\bf isr}$, and ${\bf outr}$ rather than the ${\bf decide}$
primitive. Our construction would be easily modified to accomodate
such differences.}
\begin{myassum}
  The only primitive redex involving a pair $\mypair{t_1}{t_2}$ is
  \[\spread{\mypair{t_1}{t_2}}{x,y.B(x,y)} \mapsto B(t_1,t_2)\]
  The only primitive redex involving $\inl{t}$ is
  \[\decide{\inl{t}}{x.B(x)}{y.C(y)} \mapsto B(t)\]
  The only primitive redex involving $\inr{t}$ is
  \[\decide{\inr{t}}{x.B(x)}{y.C(y)} \mapsto C(t)\]
  The only primitive redex involving $\lambda x. B(x)$ is
  \[(\lambda x. B(x)) (t) \mapsto B(t)\]
\end{myassum}

\begin{mylemma}
  If $A$ and $B$ are finitary then $A + B$ is finitary.
  If $A$ is finitary and for all $a \in A$, $B(a)$ is finitary then the types
  $\depfun{a}{A}{B(a)}$ and $\depprod{a}{A}{B(a)}$ are both finitary.
  \label{lemma:finitary}
\end{mylemma}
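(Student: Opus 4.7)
The plan is to treat the three closure properties in turn, verifying for each case that finiteness, the value-type condition, and the retract condition are preserved. Finiteness and its enumerations are built in the standard way from $L_A$ and the $L_{B(a)}$: for $A+B$, concatenate $\inl$-tagged and $\inr$-tagged copies; for $\depprod{a}{A}{B(a)}$, form all pairs $\mypair{a}{b}$ with $a \in L_A$, $b \in L_{B(a)}$; and for $\depfun{a}{A}{B(a)}$, use the graph-table representation $\graph{}$/$\lookup{}$ from the preceding paragraph and observe that a table is the choice of one output for each $a \in L_A$, giving only finitely many such tables up to the extensional equality on the function type. The value-type condition is immediate: $\inl v$, $\inr v$, $\mypair{v}{w}$, and $\lambda x.\,b(x)$ are already canonical forms regardless of the canonicity of their subterms, and the component value-type hypotheses ensure that every inhabitant of the combined type reduces to one.

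For the retract condition I would construct each $i_T$ so that it syntactically exposes $t$ to a reducing primitive, thereby securing strictness. For the sum, take
\[ i_{A+B}(t) \;=\; \decide{t}{x.\ \inl{(i_A(x))}}{y.\ \inr{(i_B(y))}}, \]
which reduces only after $t$ converges to $\inl v$ or $\inr v$. For the dependent product, take
\[ i_{\depprod{a}{A}{B(a)}}(t) \;=\; \spread{t}{x,y.\ \mypair{i_A(x)}{i_{B(x)}(y)}}, \]
using $\spread$ to force $t$ into a pair. In both cases the identity condition $i_T = id$ on $T$ follows by one-step reduction combined with the identity clauses for $i_A$, $i_B$, and $i_{B(x)}$.

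The main obstacle is the dependent function case, since a $\lambda$-abstraction is itself a canonical form, and so the naive choice $i(t) = \lambda x.\, i_{B(x)}(t(i_A(x)))$ converges for every $t$ and violates strictness whenever $t$ diverges. I would circumvent this by guarding the construction with a call-by-value step that forces $t$ to converge before the resulting lambda is built:
\[ i_{\depfun{a}{A}{B(a)}}(t) \;=\; \cbvap{(\lambda f.\ \lambda x.\ i_{B(x)}(f(i_A(x))))}{t}. \]
If $t$ diverges then so does this outer application; if $t$ converges to a value $f$, the term reduces to $\lambda x.\, i_{B(x)}(f(i_A(x)))$, which is extensionally equal to $t$ on $\depfun{a}{A}{B(a)}$ because $i_A$ and each $i_{B(x)}$ are identities on their types. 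Use of this call-by-value primitive is consistent with the footnote permitting minor adjustments to the underlying computation system, and since the graph-table enumeration identifies functions only up to extensionality, the cardinality argument and the retract clause close together cleanly.
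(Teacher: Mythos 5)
Your decomposition, the finiteness count, and the value-type check all match the paper, and your instinct that laziness of canonical forms is the central danger for the retractions is exactly right --- but you apply the cure in the wrong places, and all three of your retraction terms fail the condition that actually matters downstream. A retraction must be a map $\Term \rightarrow \bar{T}$, i.e.\ it must satisfy $i_T(t)\downarrow \Rightarrow i_T(t) \in T$ in addition to strictness; this is what makes ``$i_T(t)\downarrow$'' equivalent to ``$i_T(t)\in T$'' in the definition of \emph{consistent with}, and it is what the validity proofs for the derivation rules later consume. Your sum and pair retractions emit $\inl{(i_A(x))}$ and $\mypair{i_A(x)}{i_{B(x)}(y)}$, but under the paper's operational semantics $\inl{\cdot}$ and $\mypair{\cdot}{\cdot}$ are canonical regardless of their subterms, so these converge as soon as $t$ reaches the right outermost constructor even when the component retractions diverge --- precisely the defect you diagnosed for $\lambda$. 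The paper's product retraction is $\spread{t}{x,y.\cbvapp{\lambda a.\cbvapp{\lambda b.\mypair{a}{b}}{i_{B(a)}(y)}}{i_A(x)}}$: the nested call-by-value applications force both $i_A(x)$ and $i_{B(a)}(y)$ to values \emph{before} the pair is assembled. The later proof that \TirName{$\wedge$pair} is a valid rule unfolds exactly this term to conclude that both components are consistent; with your version that step yields nothing.

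The dependent function case has the same gap one level deeper. Your guard $\cbvapp{\lambda f.\lambda x.\,i_{B(x)}(f(i_A(x)))}{t}$ only forces $t$ to reduce to \emph{some} canonical value, not to behave as a function on $A$: for $t = \lambda x.\stuck$ your retraction converges to $\lambda x.\,i_{B(x)}(\stuck)$, which is not a member of $\depfun{a}{A}{B(a)}$, so the codomain condition fails again. The paper's retraction is $\lookup{\graph{\lambda a.\,i_{B(a)}(t(a))}}$ --- it exploits finiteness of $A$ to eagerly tabulate $i_{B(a)}(t(a))$ at every $a$ in the enumeration $L_A$ and converges only when all of these do. You invoke the graph/table device for the cardinality count, but the retraction is where it is actually indispensable. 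The uniform fix is that every retraction must recursively force the component retractions to values (by call-by-value sequencing for injections and pairs, and by building the finite graph for functions) before emitting a canonical form; only then does convergence of $i_T(t)$ certify membership in $T$.
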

\begin{proof}
  It is straightforward to prove that the types are finite, value types.
  The retraction maps $i$, $j$, and $k$ for $A + B$ ,
  $\depfun{a}{A}{B(a)}$, and $\depprod{a}{A}{B(a)}$ are
  \begin{eqnarray*}
  i(t) &=&  \decide{t}{x.i_A(x)}{y.i_B(y)}\\
  j(t) &=&  \lookup{\graph{\lambda a.i_{B(a)}(t(a))}}\\
  k(t) &=& \spread{t}{x,y.\cbvapp{\lambda a.\cbvapp{\lambda b.\mypair{a}{b}}{i_{B(a)}(y)}}{i_A(x)}}
  \end{eqnarray*}
  The operation $\cbvap{f}{x}$ is a {\em call-by-value apply}, so for the retraction $k(t)$ to converge,
  the term $t$ must evaluate to a pair $\mypair{x}{y}$, the value $a = i_A(x)$ must converge, and the value
  $b = i_{B(a)}(y)$ must converge, before the pair $\mypair{a}{b} \in \depprod{a}{A}{B(a)}$ is formed.

\end{proof}
\begin{mycor}
  If $M \in \Struct$ is finitary and $\psi \in \Fml$ then $M(\psi)$ is finitary.
\end{mycor}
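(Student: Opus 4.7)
The plan is to proceed by structural induction on the formula $\psi \in \Fml$, invoking Lemma~\ref{lemma:finitary} at every compound case and the hypothesis that $M$ is finitary at the atomic cases.

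For the base cases: when $\psi = R_i(v_1,\dots,v_{n_i})$, the type $M(\psi) = M(R_i)(M(v_1),\dots,M(v_{n_i}))$ is finitary directly by the definition of a finitary structure. When $\psi = \Fls$, we have $M(\psi) = \Void$, which is finite (equipollent to $\nsub{0}$), vacuously a value type, and a retract via $\lam{t}{\stuck}$.

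For the inductive step, the cases $\psi_1 \wedge \psi_2$, $\psi_1 \vee \psi_2$, and $\psi_1 \Rightarrow \psi_2$ are instances of the dependent product, disjoint union, and dependent function-space constructions of Lemma~\ref{lemma:finitary} with a constant family, so they follow immediately from the inductive hypothesis applied to $\psi_1$ and $\psi_2$. The defined case $\neg\psi$ reduces to the implication case. For the quantifier cases $\myall{v}{\psi}$ and $\myex{v}{\psi}$, the interpretation is $\depfun{x}{D}{(\subst{M}{v}{x})(\psi)}$ and $\depprod{x}{D}{(\subst{M}{v}{x})(\psi)}$; here $M(D)$ is finitary because $M$ is, and for each $x \in M(D)$ the extended structure $\subst{M}{v}{x}$ is still finitary, so the inductive hypothesis gives that $(\subst{M}{v}{x})(\psi)$ is finitary, and another appeal to Lemma~\ref{lemma:finitary} finishes the case.

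I expect no serious obstacle: the result is essentially a packaging of Lemma~\ref{lemma:finitary} along the grammar of $\Fml$. The only mildly delicate point is the quantifier case, where one must observe that substituting a concrete $x \in M(D)$ for the variable $v$ preserves the finitariness of the structure (so that the inductive hypothesis applies to the body $\psi$ under the extended structure), but this is immediate from the definition of $\subst{M}{v}{x}$, which only alters the variable map and leaves $M(D)$ and the $M(R_i)$ untouched.
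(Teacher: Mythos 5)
Your proof is correct and is exactly the routine structural induction the paper intends (it states this as an immediate corollary of Lemma~\ref{lemma:finitary} and omits the proof). You rightly handle the two points the paper leaves tacit --- that $\Void$ is finitary (retract $\lam{t}{\stuck}$) and that extending $M$ by a value for a bound variable preserves finitariness in the quantifier cases.
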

\begin{mydef}
  We abbreviate $\cbvapp{\lambda a.\mypair{a}{y}}{x}$ as $\cbvpair{x}{y}$. This operation forms a pair only after the first component has been evaluated.
\end{mydef}
\begin{mydef}
  A term $t$ is {\em consistent with} a retract type $T$ if $i_T(t) \in T$ or, equivalently, if $i_T(t) \downarrow.$

  If $A$ is a retract then
  a function $f :\! A \rightarrow B$ is {\em tight} if the domain of $f$ contains only terms consistent with $A$, i.e. if
  for all terms $t$ \[(f(t)\downarrow) \Rightarrow (i_A(t)\downarrow).\]
\end{mydef}
\begin{mylemma}
 If $f$ has type $A \rightarrow B$ and $A$ is a retract, then there is a tight function
 $f' = f \in (A \rightarrow B).$
  \label{lemma:consistent}
\end{mylemma}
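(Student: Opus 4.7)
The plan is to define the tight function explicitly by interposing the retraction and forcing its evaluation, using the call-by-value apply already introduced in the proof of Lemma~\ref{lemma:finitary}. Specifically, I would set
\[ f'(t) \;=\; \cbvap{f}{i_A(t)}, \]
so that whenever $f'(t)$ is invoked, the argument $i_A(t)$ must evaluate to a canonical form before $f$ is applied to that value.

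Two things then need to be checked. First, tightness: suppose $f'(t)\downarrow$. By the assumption on call-by-value application, $\cbvap{f}{i_A(t)}\downarrow$ forces $i_A(t)\downarrow$, which is precisely the statement that $t$ is consistent with $A$. So the domain of $f'$ contains only terms consistent with $A$, as required. Second, the typing and equality $f' = f \in (A\rightarrow B)$: for any $t\in A$, the retract property gives $i_A(t) = t \in A$; in particular the two sides compute to the same canonical value. Hence $\cbvap{f}{i_A(t)}$ reduces to $f$ applied to that common value, which equals $f(t)\in B$ by functionality of $f$. Extending this to equal inputs $t_1 = t_2 \in A$ uses the same argument together with functionality of $i_A$ on $A$.

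The one point requiring care is this last equality step, which mixes operational reasoning (the call-by-value reduction rule) with the extensional equality of the type theory. The mild subtlety is that $\cbvap{f}{i_A(t)}$ is syntactically distinct from $f(t)$, and we must argue they are equal as elements of $B$, not merely that both happen to converge. This is handled by observing that, for $t\in A$, the terms $i_A(t)$ and $t$ are equal in $A$ and (since $A$ is a retract with strict $i_A$) compute to the same canonical form, so the two applications reduce to the same value. I expect this to be the only potential obstacle; once it is dispensed with, the lemma is essentially the observation that $i_A$ itself turns an arbitrary member of $A\rightarrow B$ into a tight one by prefixing the argument with a strict retraction.
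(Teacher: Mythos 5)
Your proposal is correct and is essentially the paper's own proof: the paper sets $f' = f \circ i_A$, notes that $i_A$ is the identity on $A$ to get $f' = f \in (A \rightarrow B)$, and observes that the domain of $f'$ lies in the domain of $i_A$. Your explicit $\cbvap{f}{i_A(t)}$ is in fact the right reading of that composition --- with ordinary lazy application a constant $f$ would converge without forcing $i_A(t)$, and the lemma is later applied precisely to constant functions --- so you are making explicit a strictness point the paper leaves implicit.
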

\begin{proof}
  Let $f' = f \circ i_A $ where $i_A$ is the retraction onto $A$. Since $i_A$ is the identity on $A$, we have
 $f' = f \in (A \rightarrow B)$. The domain of $f'$ contains only terms in the domain of $i_A$.
\end{proof}

\begin{mydef}
  For any $\sigma : V_0 \rightarrow T_0$ that is an injection from a finite subset $V_0 \subset \Var$ into a finitary type $T_0$
  we define a finitary $\Lang$-structure $M_{triv}(\sigma)$ by
  \begin{eqnarray*}
    M(D) &=& T_0\\
    M(v) &=& \sigma(v) \\
    M(R_i) &=& \lambda x_1,\dots,x_{n_i}. \Unit.
  \end{eqnarray*}
\end{mydef}
\begin{mylemma}
  For any $\psi \in \MFml$ with free variables in $V_0$, $M_{triv} \models \psi.$
  \label{lemma:sat}
\end{mylemma}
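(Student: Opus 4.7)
The plan is to prove this by structural induction on $\psi$, showing the stronger statement that for every $\psi \in \MFml$ and every extension $\sigma'$ of $\sigma$ to a map defined on all free variables of $\psi$, the type $(M_{triv}(\sigma'))(\psi)$ is inhabited, and in fact we can exhibit a canonical witness. The key observation driving the proof is that $\MFml$ excludes $\Fls$ and negation, so the interpretation $M_{triv}(\sigma')(\psi)$ is built from $\Unit$ using only the positive type formers $\times$, $+$, $\rightarrow$, $\Pi$, and $\Sigma$; every such type is inhabited provided $M(D) = T_0$ is non-empty.

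For the base case, an atomic formula $R_i(v_1,\ldots,v_{n_i})$ is interpreted as $(\lambda x_1,\ldots,x_{n_i}.\Unit)(\sigma'(v_1),\ldots,\sigma'(v_{n_i})) = \Unit$, which is inhabited by $\star$. For $\psi_1 \wedge \psi_2$, the induction hypothesis yields witnesses $w_1, w_2$ and we return $\mypair{w_1}{w_2}$. For $\psi_1 \vee \psi_2$, we return $\inl{w_1}$ where $w_1$ is obtained from the IH on $\psi_1$. For $\psi_1 \Rightarrow \psi_2$, we return the constant function $\lambda x. w_2$ where $w_2$ is the IH witness for $\psi_2$. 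For $\myall{v}{\psi_0}$, we return $\lambda x. w(\sigma'[v:=x])$, where $w(\sigma'[v:=x])$ is given by applying the (parameterized) induction hypothesis to $\psi_0$ at the extended assignment $\sigma'[v:=x]$. For $\myex{v}{\psi_0}$, we pick some fixed $d_0 \in T_0$ and return $\mypair{d_0}{w(\sigma'[v:=d_0])}$.

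The main subtlety, and probably the main obstacle, is the $\exists$ case: it requires $T_0$ to be non-empty, which aligns with the minimal-logic convention stated earlier in the paper that domains of discourse are non-empty. A second subtle point is that the induction hypothesis must be stated for arbitrary extensions $\sigma'$, since the nested quantifier cases feed in assignments $\sigma'[v:=x]$ that need not respect the injection requirement implicit in the definition of $M_{triv}$; the proof does not actually use injectivity, so this is handled by phrasing the inductive claim in terms of arbitrary finite extensions of $\sigma$ into $T_0$. Everything else reduces to straightforward term construction, and finitariness of the resulting types follows from the corollary of Lemma~\ref{lemma:finitary}, which already guarantees that each $M_{triv}(\sigma')(\psi)$ is a finitary type.
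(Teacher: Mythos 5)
Your proof is correct and follows essentially the same route as the paper, which simply observes that $M_{triv}(\sigma)$ assigns $\Unit$ to every atomic formula and declares it ``clear'' that non-emptiness propagates through all the minimal-logic type formers. You have filled in the induction the paper leaves implicit, correctly flagging the only real subtleties (non-emptiness of $T_0$ for the $\exists$ case and generalizing the inductive hypothesis over extensions of $\sigma$).
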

\begin{proof}
  The structure $M_{triv}(\sigma)$ assigns to every atomic formula the non-empty type $\Unit$.
  It is then clear that $M_{triv}(\sigma)$  assigns a non-empty type to every minimal logic formula $\psi$
  and hence $M_{triv} \models \psi$. This would not be true for general first-order
  formulas that include negation and False.
\end{proof}
\section {Evidence structures}
We will use the concept of an {\em evidence structure} to build a bridge between uniform evidence terms and proofs.
An evidence structure will have three parts, a context $H$, a goal $G$, and evidence term $evd$.
The context $H$ will include some declarations of the form $d_i:D$ where $d_i \in \Var$ (the variables in $\Fml$),
but it will also include declarations of the form $v_i: A$ where $A \in \MFml$ is a subformula of the orginal goal $\psi$
and $v_i$ is a variable chosen from another set of variables $\Var' = v_0, v_1, v_2, \dots$ disjoint from $\Var = \left\{ d_0, d_1, \dots \right\}$.
The context $H$ will also contain {\em constraints} of the form $\cbvap{f}{d} = t$ where $f\in \Var'$ and term $t$ is a {\em pattern} over $H$.
\begin{mydef}
  Given a set $H$ of variable declarations $v:\! T$, the set of
  patterns over $H$ is the set of typed terms defined inductively by:
  \begin{enumerate}
    \item Any $v:\! T \in H$ is a pattern.
    \item If $ptn_1:\!A$ and $ptn_2:\! B$ are patterns then the following are patterns:
      \begin{itemize}
    \item $\mypair{ptn_1}{ptn_2}:(A \times B)$
    \item $\inl{ptn_1}:\!(A + B)$
    \item $\inr{ptn_2}:\!(A + B).$
      \end{itemize}
  \end{enumerate}
\end{mydef}
\begin{mydef}
  A typing $H$ over $\Lang$ is a list of declarations of one of the two forms:
      \begin{enumerate}
    \item $d:\! D$ where $d \in \Var$.
    \item $v:\! A$ where $v \in \Var'$ and $A \in \MFml$ such that every
      free variable $d$ of $A$, is declared in $H$.
      \end{enumerate}
      A {\em model} $M$ of $H$ is a finitary structure for $\Lang$ extended so that for each $v:\!T$ in $H$,
      $M(v) \in M(T)$.
\end{mydef}

\begin{mydef}
  An {\em implies constraint} on a typing $H$ is an equation
  \[v_i = \const{t}\]\
  where $v_i:\! A \Rightarrow B \in H$
  and $t$ is a pattern of type $B$.
  The constraint is {\em stratified} if for any variable $v_j$ in pattern $t$, $i < j$.
  The constraint is {\em unique} in $H$ if there is no other constraint $v_i = \const{t'}$ in $H$.
  A model $M$ of $H$ satisfies the  constraint if $M(v_i) = \lambda x. M(t) \in (M(A)\rightarrow M(B))$.

  A {\em forall constraint} is an equation
  \[\cbvap{v_i}{d} = t\]
  where $d:\!D \in H$ and for some formula $P\in \MFml$, $v_i:\!\myall{z}{P} \in H$ and
  $t$ is a pattern of type $P(d)$ over $H$.
  The constraint is {\em stratified} if for any variable $v_j$ in pattern $t$, $i < j$.
  The constraint is {\em unique} in $H$ if there is no other constraint $\cbvap{v_i}{d} = t'$ in $H$.
  A model $M$ of $H$ satisfies the constraint if $M(v_i)(M(d)) = M(t) \in M(P(d))$.

  An evidence context $H$ over $\Lang$ is a list of declarations and unique, stratified constraints such that
  the declarations are a typing over $\Lang$ and the constraints are constraints on that typing.
  $M$ is a {\em model of context} $H$ if it is a model of the typing $H$ that satisfies all the constraints.
      We write $M \models H$ to say that $M$ is a model of context $H$; note that this means that $M \in \Struct$ and
      $M$ is finitary.
\end{mydef}
\begin{mydef}
  A model $M \models H$ is {\em tight} if for every $f:\! A\rightarrow B \in H$, the function $M(f)$ is tight.
  We write $M \models_t H$ when $M$ is tight.
\end{mydef}
\begin{mylemma}
  Every evidence context $H$ over $\Lang$ has a tight model.
  \label{satisfiable}
\end{mylemma}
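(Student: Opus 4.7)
The plan is to build a tight finitary model $M$ of $H$ by a staged construction that exploits both the stratification and the uniqueness of constraints. First I fix the underlying structure: take $M(D) = \nsub{n}$ where $n$ is the number of declarations $d:\!D$ in $H$, and assign these $d$'s to distinct elements of $\nsub{n}$, so that forall constraints on a given $v_i$ at different $d$'s will constrain $M(v_i)$ at different points of $M(D)$. Set every $M(R_i) = \lam{x_1,\dots,x_{n_i}}{\Unit}$, as in the trivial model of Lemma~\ref{lemma:sat}. By Lemma~\ref{lemma:finitary} and its corollary, each type $M(A)$ assigned to a subformula is then a non-empty finitary type.

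Next I define $M(v_i)$ for the declarations $v_i:\!A_i$ in $H$ by processing them in reverse stratification order (largest index first). For a $v_i$ carrying no constraint, pick any element of the non-empty type $M(A_i)$. If $v_i$ carries its unique implies constraint $v_i = \const{t}$ (so $A_i = A \Rightarrow B$), set $M(v_i) = \lam{x}{M(t)}$: by stratification every variable $v_j$ appearing in the pattern $t$ satisfies $j > i$ and has already been assigned, so $M(t)$ is well-defined. If $v_i:\!\myall{z}{P}$ carries forall constraints $\cbvap{v_i}{d} = t_d$, define $M(v_i)$ by tabulating its graph on the finite type $M(D)$, putting $M(v_i)(M(d)) = M(t_d)$ at each constrained point and arbitrary values from the non-empty $M(P(d))$ elsewhere; this is possible since $M(D)$ is finite with decidable equality. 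Finally, apply Lemma~\ref{lemma:consistent} to each declared function $f:\!A \rightarrow B$ in $H$, replacing $M(f)$ by $M(f) \circ i_{M(A)}$ to obtain a tight version.

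The crucial check is that tightening preserves constraint satisfaction. Since $i_{M(A)}$ is the identity on $M(A)$, the extensional equality $M(v_i) = \lam{x}{M(t)} \in (M(A) \rightarrow M(B))$ is preserved for implies constraints, and for a forall constraint we have $M(v_i)'(M(d)) = M(v_i)(i_{M(D)}(M(d))) = M(v_i)(M(d)) = M(t)$, so the constraint still holds. The main obstacle is ensuring the mutual consistency of all constraints on a given $v_i$: uniqueness rules out clashing implies constraints, and assigning distinct $M(d) \in \nsub{n}$ for distinct declared variables $d$ is precisely what prevents forall constraints at different $d$'s from pinning $M(v_i)$ to two different values at the same point. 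Stratification then lets the whole tabulation proceed in one pass without circular dependencies.
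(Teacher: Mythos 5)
Your construction is essentially the paper's own proof: start from the trivial structure of Lemma~\ref{lemma:sat}, assign values to the $v_i$ in reverse stratification order using the unique constraints where present and arbitrary elements of the non-empty types otherwise, tabulate the forall-constrained functions over the finite domain, and invoke Lemma~\ref{lemma:consistent} to make the implication witnesses tight. The only quibble is that you should take $M(D)=\nsub{k}$ with $k$ strictly greater than the number of declared $d:\!D$ (as the paper does), so that the domain remains non-empty even when $H$ declares no domain variables; otherwise the unconstrained existential and domain declarations could not be given values.
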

\begin{proof}
  \newcommand\Mtriv{M_{triv}(\sigma)}
  Let $V_0$ be the variables $d_i$ for which $d_i:\!D \in H$.
  We first choose a finitary type $T_0$ and an injection $\sigma: V_0 \rightarrow T_0$ (we can use $\nsub{k}$ for $k > \vert V_0\vert$).
  We construct the model $M$ by extending the model $\Mtriv$,
  choosing values for the variables that satisfy all the constraints.
  Since the constraints are stratified, we choose values for the variables in reverse order.
  Let $v_j \in \Var'$ be a variable with a declaration $v_j:T \in H$ and assume that we have chosen values for all variables $v_k$ in $H$
  with $j < k$.
  Assign a value to all patterns $t$ all of whose variables $v_k$ have $k > j$ recursively as follows:
  $M(\mypair{p_1}{p_2}) = \mypair{M(p_1)}{M(p_2)}$, $M(\inl{p}) = \inl{M(p)}$, $M(\inr{p}) = \inr{M(p)}$.

  If $T$ is $\myall{x}{P}$ for some $P$, then for each $d_i \in V_0$ we choose a value $w_i \in M(P(d_i))$ as follows:
  If there is a (unique) constraint $\cbvap{v_j}{d_i} = t_i$ in $H$ then we use $w_i = M(t_i)\in M(P(d_i))$ (which is defined since values for the
  variables in pattern $t_i$ have already been chosen). Otherwise we choose $w_i$ from the non-empty type $M(P(d_i))$.
  Since the values $M(d_i)$ are all distinct members of the finite type $M(D) = T_0$,
  we set the value of $v_j$ to be a function of type $x:M(D) \rightarrow M(P(x))$ that
  maps each $M(d_i)$ to $w_i$. This function is $\lookup{\left[ \mypair{M(d_i)}{w_i} \vert d_i \in V_0\right]}$.

  If $T$ is $A \Rightarrow B$ then if there is a (unique) implies constraint $v_j = \const{t}$ then let $w = M(t)$ and
  choose the constant function $\lambda x.w$ made tight by applying lemma~\ref{lemma:consistent}.
  If there is no constraint on $v_j$ then we choose any member of the non-empty type $M(A) \rightarrow M(B)$
  and make the chosen function tight by applying lemma~\ref{lemma:consistent}.

  Otherwise there are no constraints on $v_j$, and $\Mtriv(T)$ is non-empty
  by lemma~\ref{lemma:sat} so we may choose a value for $v_j$ from this type.
\end{proof}
\begin{mydef}
  An evidence structure is a triple $H \models G, evd$ where
  \begin{enumerate}
     \item $H$ is an evidence context.
     \item $G \in \MFml$ .
     \item for every $M \in \Struct$, if $M \models_t H$ then $M(evd)$ is consistent with $M(G)$.
  \end{enumerate}
We write $\subst{t}{v}{e}$ for the result of substitution of $e$ for variable $v$ in term $t$,
and we write $\subst{(H \models G, evd)}{v}{e}$ for the result of substitution of $e$ for $v$ everywhere in the evidence structure $H\models G, evd$.
\end{mydef}
\begin{myfact}
  If $evd$ is uniform evidence for a formula $\psi \in \MFml$ then \[ \models \psi, evd\] is an evidence structure.
\end{myfact}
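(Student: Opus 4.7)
The plan is to unfold the three clauses of the definition of evidence structure for the triple $(\emptyset \models \psi, evd)$ and check each directly. Clause~2 (namely $\psi \in \MFml$) is immediate from the hypothesis. Clause~1 requires that the empty list is an evidence context: it is vacuously a typing (no declarations to check) and carries no constraints, so there is nothing to verify for uniqueness or stratification.

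The substantive clause is~3: for every $M \in \Struct$ with $M \models_t \emptyset$, the term $M(evd)$ must be consistent with $M(\psi)$. I would first observe that ``$M \models_t \emptyset$'' imposes no conditions beyond $M$ being a finitary structure, since the definition of ``model of $H$'' already requires $M$ to be finitary. Because $\emptyset$ declares no variables, the substitution induced by $M$ on $evd$ is trivial, so $M(evd) = evd$, and it remains only to show that $evd$ is consistent with $M(\psi)$ for every finitary $M \in \Struct$.

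For such an $M$, the corollary following Lemma~\ref{lemma:finitary} gives that $M(\psi)$ is finitary, hence in particular a retract; its retraction $i_{M(\psi)}$ is therefore defined and acts as the identity on $M(\psi)$. The hypothesis that $evd$ is uniform evidence says $evd \in \isect{M\in\Struct}{M(\psi)}$, so $evd \in M(\psi)$ for this $M$ in particular. Consequently
\[ i_{M(\psi)}(evd) = evd \in M(\psi), \]
which converges (since $M(\psi)$ is a value type), so by definition $evd$ is consistent with $M(\psi)$, as required.

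There is no genuine obstacle here; the only delicate point is the observation that ``$M \models_t H$'' already bakes finitariness of $M$ into clause~3, so that the corollary to Lemma~\ref{lemma:finitary} is available and one can immediately apply the defining property of a retract. Once this is noted, the argument is a routine unpacking of definitions, and the fact serves simply as the base case that launches the procedure of Theorem~\ref{thm:min}.
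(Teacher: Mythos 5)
Your proof is correct and matches the paper's intent: the paper states this Observation without any proof at all, treating it as an immediate unpacking of the definitions, and your argument supplies exactly that unpacking (empty context is vacuously an evidence context; a tight model of the empty context is just a finitary structure; uniform evidence lies in $M(\psi)$ for every such $M$, and since $M(\psi)$ is finitary, hence a retract and a value type, membership gives consistency via $i_{M(\psi)}(evd) = evd$). Nothing is missing, and your identification of the finitariness/retract point as the only place where an actual definition must be invoked is accurate.
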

\section{Derivation rules for evidence structures}
We now define a set of sixteen derivation rules by which
we derive evidence structures from evidence structures. We will prove that
\begin{enumerate}
  \item If $H \models G, evd$ is an evidence structure, then $evd$ computes to $evd'$ that is canonical or has a principal argument that is a variable.
  \item $H \models G, evd'$ is an evidence structure that it matches one of the sixteen derivation rules.
  \item This defines a recursive procedure on evidence structures that results in a tree of \emph{derived evidence structures}.
  \item The tree derived from $( \models \psi, evd)$ is finite, and from it we can construct a minimal logic proof of $\psi$.
\end{enumerate}
\begin{figure}[h]
\begin{mathpar}
\inferrule[$\wedge$pair]
{H\models G_1 \wedge G_2, \mypair{ evd_1}{evd_2}}
                {H\models G_1,evd_1 \\ H\models G_2,evd_2} \and
\inferrule[$\exists$pair]
                {H\models \myex{x}{G}, \mypair{ evd_1}{evd_2} }
                {H\models \myex{x}{G},\cbvpair{evd_1}{evd_2} } \and
\inferrule[$\exists$val pair]
        {d:\! D\in H\models \myex{x}{G}, \cbvpair{d}{evd}\ ({\it d\ a\ variable})}
                {H \models \subst{G}{x}{d},evd} \\
\inferrule[$\vee$inl]
        {H\models G_1 \vee G_2, \inl{evd}}
                {H\models G_1,evd} \and
\inferrule[$\vee$inr]
        {H\models G_1 \vee G_2, \inr{evd}}
                {H\models G_2,evd} \and
\inferrule[$\Rightarrow$$\lambda$]
        {H\models G_1 \Rightarrow G_2, \lam{x}{evd}}
                {H;x:\!G_1\models G_2,evd} \and
\inferrule[$\forall$$\lambda$]
        {H\models \myall{y}{G}, \lam{x}{evd}}
                {H;d:\!D\models \subst{G}{y}{d},evd}
\end{mathpar}
The bound variable in $\lam{x}{evd}$ in \TirName{$Rightarrow\lambda$} is renamed to avoid variables in $H$
and the variable $d$ in \TirName{$\forall\lambda$} is fresh.
\caption{
Rules for evidence structures with canonical evidence.}
  \label{fig:canonical}
\end{figure}

The first seven derivation rules shown in Figure~\ref{fig:canonical}
match evidence structures where the evidence is in canonical form.
\begin{mydef}
  An evidence derivation rule is valid if for any evidence structure matching the pattern above the line, the derived instances below the line
  are evidence structures.
\end{mydef}
\begin{mylemma}
  The rules in Figure~\ref{fig:canonical} are valid.
\end{mylemma}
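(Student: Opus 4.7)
The plan is to verify each of the seven rules by a uniform recipe: given a tight model $M'$ of the conclusion's context, trace the retraction formulas from Lemma~\ref{lemma:finitary} applied to the interpretation of $evd$ on the premise side, and read off convergence of the retractions applied to the subterms appearing below the line. In every case the canonical shape of $evd$ triggers a specific primitive redex (spread for a pair, decide for an injection, beta for a lambda), so consistency with the compound type unfolds into consistency of the components with the component types.

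For the three projection-style rules $\wedge$pair, $\vee$inl, $\vee$inr, the retraction applied to the canonical compound form reduces in one or two steps to the component retractions: for $\wedge$pair the retraction $k$ on $\mypair{M(evd_1)}{M(evd_2)}$ performs a spread and two call-by-value applies, forcing both $i_{M(G_1)}(M(evd_1))$ and $i_{M(G_2)}(M(evd_2))$ to converge; for $\vee$inl the decide primitive reduces $i(\inl{M(evd)})$ directly to $i_{M(G_1)}(M(evd))$, and dually for $\vee$inr. The two existential rules work similarly but must track the $\mypair$ versus $\cbvpair$ distinction: in $\exists$pair the retraction for the dependent product already forces $M(evd_1)$ to evaluate to a value (needed to make the dependent range type meaningful), and this is exactly what allows $\cbvpair{M(evd_1)}{M(evd_2)}$ to reduce to the same pair and preserve consistency; in $\exists$val pair the declaration $d:\!D$ ensures $M(d)$ is a value in the value type $M(D)$, so $\cbvpair{M(d)}{M(evd)}$ reduces to $\mypair{M(d)}{M(evd)}$ and the retraction then forces $M(evd)$ to be consistent with $M(G)(M(d)) = M(\subst{G}{x}{d})$.

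The two lambda rules invoke the retraction $j$ for function types, which tabulates the lambda over the finite domain and applies the range retraction to every output; hence consistency of $\lam{x}{M(evd)}$ with $M(G_1) \rightarrow M(G_2)$ forces $M(evd)[x := a]$ to be consistent with $M(G_2)$ for every $a$ in the domain. Given any tight model $M'$ of the extended context $H; x:\!G_1$ (respectively $H; d:\!D$), its restriction to $H$ is still tight (adding a new variable declaration does not affect the tightness of function declarations already in $H$), so choosing $a = M'(x)$ (respectively $a = M'(d)$, after renaming the bound variable of the lambda to the fresh $d$) yields consistency of $M'(evd)$ with $M'(G_2)$ (respectively $M'(\subst{G}{y}{d})$). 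The main subtlety to watch is the call-by-value mechanics of the retraction for dependent sums: the $\cbvapp$ in that retraction formula is what forces the first component to a value, and this is precisely what licenses both the $\mypair \mapsto \cbvpair$ transformation in $\exists$pair and the reduction of $\cbvpair{d}{evd}$ in $\exists$val pair; once this is kept straight, each rule is a routine unfolding of the retraction formulas from Lemma~\ref{lemma:finitary}.
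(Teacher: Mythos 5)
Your proposal is correct and follows essentially the same route as the paper: take a tight model of the derived context, observe it restricts to a model of the premise context, unfold the retraction formulas from Lemma~\ref{lemma:finitary} on the canonical evidence term, and read off convergence of the component retractions (instantiating at $M(x)$ or $M(d)$ for the lambda rules). The paper only works out \TirName{$\wedge$pair} and \TirName{$\Rightarrow$$\lambda$} explicitly and declares the rest similar, so your explicit treatment of the $\cbvpair$ bookkeeping in the two existential rules is a welcome elaboration rather than a divergence.
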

\begin{proof}
  Since these rules do not add constraints to the context, we only have to prove that the derived evidence term is
  consistent with the derived goal.

  For the rule \TirName{$\wedge$pair}, suppose $M \models H$, then $\mypair{ evd_1}{evd_2}$ is consistent with $M(G_1 \wedge G_2)$.
  So,
  \begin{eqnarray*}
  i_{M(G_1) \times M(G_2)}(\mypair{ evd_1}{evd_2})\downarrow  &\Rightarrow&\\
  \cbvapp{\lambda a.\cbvapp{\lambda b.\mypair{a}{b}}{i_{M(G_2)}(evd_2)}}{i_{M(G_1)}(evd_1)}\downarrow  &\Rightarrow&\\
  i_{M(G_1)}(evd_1)\downarrow\  \wedge\ i_{M(G_2)}(evd_2)\downarrow
  \end{eqnarray*}

  For the rule \TirName{$\Rightarrow\lambda$}, suppose $M \models H;x:\!G_1$, then $\lam{ x}{evd}$ is consistent with $M(G_1 \Rightarrow G_2)$.
  So,
  \begin{eqnarray*}
  i_{M(G_1) \rightarrow M(G_2)}(\lam{x}{evd})\downarrow  &\Rightarrow&\\
  \graph{\lambda a.i_{M(G_2)}(evd(a))}\downarrow &\Rightarrow&\\
  \myall{a \in M(G_1)}{i_{M(G_2)}(evd(a))}\downarrow &\Rightarrow&\\
  i_{M(G_2)}(evd(M(x)))\downarrow
  \end{eqnarray*}
  The proofs of validity of the other rules for canonical evidence are similar to these.
\end{proof}
\begin{figure}[!h]
  \begin{mathpar}
\inferrule[var]
                {H_1;v:G;H_2\models G, v}
                {  } \and
    \inferrule[decide]
    {H_1;c:\!A\vee B;H_2\models G, \principal{\decide{\underline{c}}{x.a}{y.b}}}
                 {\subst{(H_1;x:\!A;H_2\models G, \newprincipal{a})}{c}{\inl{x}} \\
                  \subst{(H_1;y:\!B;H_2\models G, \newprincipal{b})}{c}{\inr{y}}} \and
    \inferrule[$\wedge$spread]
    {H_1;p:\!A\wedge B;H_2\models G, \principal{\spread{\underline{p}}{x,y.t}}}
    {\subst{(H_1;x:\!A;y:\!B;H_2\models G, \newprincipal{t})}{p}{\mypair{x}{y}} } \and
    \inferrule[$\exists$spread]
    {H_1;p:\myex{z}{P};H_2\models G, \principal{\spread{\underline{p}}{x,y.t}}}
    {\subst{(H_1;d:\!D;y:\!\subst{P}{z}{x};H_2\models G, \newprincipal{t})}{p}{\mypair{d}{y}} }  \and
    \inferrule[apply const]
    {f = \const{v} \in H \models G, \principal{\underline{f}(t)}}
                {H\models G, \newprincipal{v}} \and
    \inferrule[$\Rightarrow$apply]
    {\not\exists v. f = \const{v} \in H_1;f:\!A \Rightarrow B;H_2\models G, \principal{\underline{f}(t)}}
    {H_1;f:\!A \Rightarrow B;H_2\models A, t \\ H_1;f:\!A \Rightarrow B;H_2;v:\!B;f = \const{v}\models G, \newprincipal{v}} \and
    \inferrule[$\forall$apply]
    {f:\!\myall{z}{P} \in H\models G, \principal{\underline{f}(t)}}
                 {H\models G, \newprincipal{\cbvap{f}{t}} } \and
   \inferrule[apply model]
   {\cbvap{f}{d}=t \in H\models G, \principal{\cbvap{f}{\underline{d}}}}
                 {H\models G, \newprincipal{t} } \and
    \inferrule[$\forall$cbv]
    {\not\exists t.\, \cbvap{f}{d}=t \in H, \conts{f:\!\myall{z}{P},d:\!D}{H} \models G, \principal{\cbvap{f}{\underline{d}}}}
                 { H;w:\!\subst{P}{z}{d};\cbvap{f}{d}=w\models G, \newprincipal{w} }
  \end{mathpar}
  The bound variables, $d$,  $x$, and $y$, in rules \TirName{decide} and \TirName{spread} are renamed to avoid variables in $H$.
  The variables, $v$ and $w$ , introduced in rules \TirName{$\Rightarrow$apply} and \TirName{cbv new} are fresh.
  \caption{Rules for evidence structure with non canonical evidence.}
  \label{fig:noncanonical}
\end{figure}
The remaining rules match evidence that is not in canonical form. If a term is not in canonical form but some instance of it will compute to canonical form
then the term must have a subterm that is a variable and the computation depends on the value of that variable in order to proceed.
We call such a variable the {\em principal variable} and any subterm in such a position a {\em principal subterm}

\begin{mydef}
  The principal subterm $\prin{t}$ of term $t$ is defined inductively by:
  \begin{mathpar}
    \prin{\decide{d}{x.a}{y.b}} = \prin{d} \\
    \prin{\spread{p}{x,y.b}} = \prin{p} \\
    \prin{f(b)} = \prin{f} \\
    \prin{\cbvap{f}{b}} = \prin{b} \\
    \prin{\cbvpair{a}{b}} = \prin{a} \\
    \prin{t} = t, {\it otherwise}
  \end{mathpar}
  We write $\genprincipal{t}{\underline{x}}$ when $x$ is the principal subterm of $t(x)$.
\end{mydef}

The rules shown in Figure~\ref{fig:noncanonical} match on the
operator that is applied to the principal variable in the evidence.
When a fresh variable from $\Var'$ is needed, we take the least
index greater than all the variables already in use. This will
guarantee that all the constraints remain stratified.

\begin{mylemma}
  The constraints in the evidence structures derived from the rules in Figure~\ref{fig:noncanonical} are unique, stratified constraints.
  \label{lemma:ok}
\end{mylemma}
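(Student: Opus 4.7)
The plan is a case analysis over the ten rules in Figure~\ref{fig:noncanonical}, assuming inductively that the premise evidence structure's context already carries unique, stratified constraints, and checking that each rule preserves both properties on the derived context.

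Only two rules extend the list of constraints: \TirName{$\Rightarrow$apply} and \TirName{$\forall$cbv}. Each introduces a fresh pattern variable ($v$ of type $B$, respectively $w$ of type $\subst{P}{z}{d}$), which by the stated convention has the least $\Var'$-index strictly greater than every variable currently in use. The new constraint, $f = \const{v}$ or $\cbvap{f}{d}=w$, has a right-hand-side pattern consisting only of this fresh variable; its index therefore exceeds the index of the LHS variable $f$, so the constraint is stratified. The side conditions $\not\exists v.\, f=\const{v}\in H$ and $\not\exists t.\, \cbvap{f}{d}=t\in H$ respectively exclude any pre-existing constraint sharing the same LHS, so the new constraint is unique, and since no LHS of an old constraint is changed, the old constraints remain unique as well.

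The remaining rules leave the constraints essentially intact. \TirName{var}, \TirName{apply const}, \TirName{$\forall$apply}, and \TirName{apply model} do not modify the context at all. The rules \TirName{decide}, \TirName{$\wedge$spread}, and \TirName{$\exists$spread} substitute a pattern ($\inl{x}$, $\inr{y}$, $\mypair{x}{y}$, or $\mypair{d}{y}$) built from fresh variables for a variable $c$ or $p$ of sum, conjunction, or existential type. Because the substituted variable has neither function type nor type $D$, it is neither the LHS variable $f$ of any constraint nor the argument $d$ of a $\cbvap$-constraint, so the LHSes of all existing constraints are unchanged, preserving uniqueness. Where the substituted variable occurred inside the RHS pattern of a constraint, it is replaced by the fresh pattern, whose only $\Var'$-variables carry indices strictly greater than the LHS index, preserving stratification. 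The whole argument is really bookkeeping; the one subtle point—that after substitution no pre-existing RHS pattern can acquire an index conflicting with stratification—is settled immediately by the fresh-index convention, so I do not anticipate any real obstacle beyond tracking the type of each substituted variable carefully.
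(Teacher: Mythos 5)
Your proof is correct and follows essentially the same route as the paper's: you isolate \TirName{$\Rightarrow$apply} and \TirName{$\forall$cbv} as the only rules that add constraints (their side conditions give uniqueness, the fresh-index convention gives stratification), and you observe that the substitution rules only replace a variable by a pattern of fresh variables, so every right-hand side remains a stratified pattern. (A trivial slip: Figure~\ref{fig:noncanonical} has nine rules, not ten, but this does not affect the argument.)
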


\begin{proof}
  Only the rules \TirName{$\Rightarrow$apply} and \TirName{$\forall$cbv}
  add  new constraints and they apply only when there is not already a similar constraint.
  The constraints are changed only by the rules (\TirName{decide}, \TirName{spread}, and \TirName{$\Rightarrow$apply}) that substitute
  a pattern ($\inl{x}$, $\inr{y}$, or $\mypair{x}{y}$) for a variable. In each case, the new variables introduced are fresh,
  and substituting a pattern for a variable in a pattern results in a pattern, so all the constraints remain unique, stratified patterns.
\end{proof}
\begin{mylemma}
  The rules in Figure~\ref{fig:noncanonical} are valid
  \label{lemma:nv}
\end{mylemma}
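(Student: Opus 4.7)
The plan is to verify each of the nine rules in Figure~\ref{fig:noncanonical} separately, following a common pattern. Given a tight model $M$ of a derived context, one lifts it to a tight model $M'$ of the hypothesis context, invokes the hypothesis to conclude consistency of the hypothesis evidence with the hypothesis goal under $M'$, and then translates this back to consistency of the derived evidence with the derived goal under $M$. Lemma~\ref{lemma:ok} already handles the syntactic well-formedness of any new constraints, so what remains is purely the semantic content.

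First, the rule \TirName{var} is immediate: if $v:\!G \in H$ and $M \models_t H$, then $M(v) \in M(G)$, so $i_{M(G)}(M(v)) = M(v)$ converges. For \TirName{decide}, given a tight model $M$ of the derived context $\subst{(H_1;x:\!A;H_2)}{c}{\inl{x}}$, I construct a model $M'$ of $H_1;c:\!A\vee B;H_2$ by setting $M'(c) = \inl{M(x)}$ and copying all other assignments; tightness is inherited since no function assignment was altered. The hypothesis yields consistency of the hypothesis evidence under $M'$, and since $M'(c) = \inl{M(x)}$, the primitive redex assumption fires the reduction $\decide{\inl{M(x)}}{x.a}{y.b} \mapsto \subst{a}{x}{M(x)}$, which collapses the outer shell of evidence around the $\decide$ to exactly the derived evidence evaluated under $M$, giving the required consistency. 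The rules \TirName{$\wedge$spread} and \TirName{$\exists$spread} follow the same pattern with $\mypair{x}{y}$ in place of $\inl{x}$; in the existential case $x$ is replaced by a fresh variable $d:\!D$ whose value is drawn from the finitary type $M(D)$.

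For the application rules the key tool is the tightness of model functions. In \TirName{$\Rightarrow$apply}, the first premise $H \models A, t$ holds because any tight $M \models H$ has $M(f)$ a tight function, so convergence of $M(f)(M(t))$ from the hypothesis forces $i_{M(A)}(M(t))$ to converge, that is, $M(t)$ is consistent with $M(A)$. For the second premise, a tight model $M$ satisfying the added constraint $f = \const{v}$ has $M(f) = \lambda x.M(v)$, so $M(f)(M(t))$ beta-reduces to $M(v)$, and consistency of the hypothesis transfers to $M(v)$ with respect to $M(G)$. The rules \TirName{apply const}, \TirName{$\forall$apply}, \TirName{apply model}, and \TirName{$\forall$cbv} proceed analogously, each exploiting either an existing constraint in $H$, the tight-function property of $M(f)$ (which forces the argument of an application to converge whenever the application converges), or a freshly added constraint whose semantic content is a single primitive reduction step.

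The main obstacle I expect is ensuring that the lifted model $M'$ actually remains tight after interpreting the newly-introduced constraints, particularly in \TirName{$\Rightarrow$apply} and \TirName{$\forall$cbv} where the constraint fixes specific behavior for $M(f)$. Here one must invoke Lemma~\ref{lemma:consistent} to replace the chosen constant or lookup function by a tight variant equal to it as an element of $M(A) \rightarrow M(B)$, so that the extensional constraint equation continues to hold while tightness is restored. Once tightness is secured, the correspondence between a principal reduction step in the evidence term and the matching primitive reduction in the model follows immediately from the primitive redex assumption.
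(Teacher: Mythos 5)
Your proposal is correct and follows essentially the same route as the paper: for each rule, take a tight model of the derived context, lift or restrict it to a tight model of the premise context, invoke the premise's consistency, and transfer it back using strictness of $i_{M(G)}$ and tightness of the model's functions. The only quibble is that your anticipated obstacle in \TirName{$\Rightarrow$apply} and \TirName{$\forall$cbv} is moot --- there the derived context \emph{extends} the premise context with the new constraint, so any tight model of it is already a tight model of the premise, and no appeal to Lemma~\ref{lemma:consistent} is needed at that point.
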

\begin{proof}
  Because it depends on the restriction to tight models,
  we consider first the proof of
  \begin{mathpar}
    \inferrule[$\Rightarrow$apply]
    {\not\exists v. f = \const{v} \in H_1;f:\!A \Rightarrow B;H_2\models G, \principal{\underline{f}(t)}}
    {H_1;f:\!A \Rightarrow B;H_2\models A, t \\ H_1;f:\!A \Rightarrow B;H_2;v:\!B;f = \const{v}\models G, \newprincipal{v}}
  \end{mathpar}
  Assume that the structure above the line is an evidence structure, and let $M \models_t H_1;f:\!A \Rightarrow B;H_2$.
  Then $M(\principal{\underline{f}(t)})$ is consistent with $M(G)$.
  Since $i_{M(G)}$ is strict, this implies that $M(t)$ is in the domain of $M(f)$ and
  since $M$ is tight, $M(t)$ is consistent with $M(A)$. This proves the validity of the first derived structure
                $H_1;f:\!A \Rightarrow B;H_2\models A, t$

  If $M \models_t H_1;f:\!A \Rightarrow B;H_2;v:\!B;f = \const{v}$ then the model $M$ is also a tight model of $H_1;f:\!A \Rightarrow B;H_2$ so
   $M(\principal{\underline{f}(t)})$ is consistent with $M(G)$ and this implies that
   $M(\principal{v})$ is consistent with $M(G)$  because $M(f(t))$ must converge to $M(v)$.
  This proves the validity of the second derived structure and finishes the proof of the rule \TirName{$\Rightarrow$apply}

   Consider next the rule
   \begin{mathpar}
    \inferrule[$\forall$cbv]
    {\not\exists t.\, \cbvap{f}{d}=t \in H, \conts{f:\!\myall{z}{P},d:\!D}{H} \models G, \principal{\cbvap{f}{\underline{d}}}}
                 { H;w:\!\subst{P}{z}{d};\cbvap{f}{d}=w\models G, \newprincipal{w} }
   \end{mathpar}
   If $M \models_t H;w:\!\subst{P}{z}{d};\cbvap{f}{d}=w$ then $M \models_t H$ and because $M(D)$ is a value type,
   $M(\cbvap{f}{d}) = M(f(d)) = M(w) \in M(P(d))$. Since $M(\principal{\cbvap{f}{\underline{d}}}) $ is consistent with $M(G)$ and
   $i_{M(G)}$ is strict, this implies that $M(\principal{w}) $ is consistent with $M(G)$.
   So \TirName{$\forall$cbv} is a valid rule.

   Consider next the rule
   \begin{mathpar}
    \inferrule[$\exists$spread]
    {H_1;p:\myex{z}{P};H_2\models G, \principal{\spread{\underline{p}}{x,y.t}}}
    {\subst{(H_1;d:\!D;y:\!\subst{P}{z}{x};H_2\models G, \newprincipal{t})}{p}{\mypair{d}{y}} }
   \end{mathpar}
   If $M \models_t {\subst{(H_1;d:\!D;y:\!\subst{P}{z}{x};H_2)}{p}{\mypair{d}{y}} }$ then the model
   \[M' = M\left[ p := \mypair{M(d)}{M(y)} \right]\] is
   a tight model of $H_1;p:\myex{z}{P};H_2$, so $M'(\principal{\spread{\underline{p}}{x,y.t}}) $ is consistent with $M'(G)$. This implies
   that $M(\subst{\newprincipal{t})}{p}{\mypair{d}{y}} )  $ is consistent with $M'(G) = M(G)$.

   The proofs for the validity of the remaining rules are similar to these.
\end{proof}
\begin{mylemma}
  If $H \models G, evd$ is an evidence structure, and $evd'$ is obtained by computing $evd$ until it is canonical or has a principal variable,
  then $H \models G, evd'$ is an evidence structure.
  \label{lemma:compute}
\end{mylemma}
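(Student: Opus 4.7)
The plan is to reduce the claim to the standard soundness of primitive reduction in the underlying computation system. Conditions 1 and 2 of the definition of an evidence structure depend only on $H$ and $G$, not on the evidence term, so they are preserved trivially. The entire content of the lemma is therefore condition 3: assuming $M(evd)$ is consistent with $M(G)$ for every tight model $M \models_t H$, we must show the same for $M(evd')$.

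By hypothesis, $evd'$ is obtained from $evd$ by a sequence of primitive reductions; by the Assumption, these are exactly $\beta$-reduction, $\mathbf{spread}$ on a pair, and $\mathbf{decide}$ on $\mathbf{inl}$ / $\mathbf{inr}$ (and their call-by-value variants). Each such reduction is a semantic identity in the structured operational semantics of the computation system: if $t \mapsto t'$ then under any interpretation $M$ the terms $M(t)$ and $M(t')$ are operationally equivalent, so one converges to a canonical form if and only if the other does. Since $M(G)$ is finitary (by the Corollary to Lemma~\ref{lemma:finitary}), hence a retract with strict retraction $i_{M(G)}$, consistency of $M(evd)$ with $M(G)$ just means $i_{M(G)}(M(evd))\downarrow$. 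Operational equivalence of $M(evd)$ and $M(evd')$ then gives $i_{M(G)}(M(evd'))\downarrow$, which is exactly consistency of $M(evd')$ with $M(G)$.

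The step I would carry out first is to fix an arbitrary tight model $M \models_t H$; then I would appeal to the hypothesis to obtain $i_{M(G)}(M(evd))\downarrow$; then I would invoke the reduction $evd \mapsto^{*} evd'$, pushed through the interpretation $M(\cdot)$, to conclude $M(evd) \simeq M(evd')$; and finally close with strictness of $i_{M(G)}$ to transfer convergence. Because $M$ was arbitrary, condition 3 holds for $evd'$.

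The only genuine obstacle is semantic bookkeeping: one must check that each primitive reduction listed in the Assumption is matched by the corresponding definitional computation rule in the type-theoretic interpretation $M(\cdot)$, and that the ``consistent with $M(G)$'' predicate is insensitive to swapping operationally-equivalent terms. These are standard facts about the Plotkin-style operational semantics being used, and nothing about this lemma requires the reduction sequence $evd \mapsto^{*} evd'$ actually to terminate --- termination is an assumption supplied by the statement, to be addressed separately using the Fan Theorem as indicated in the discussion after Theorem~\ref{thm:min}.
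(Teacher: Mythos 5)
Your proposal is correct and follows essentially the same route as the paper: fix an arbitrary tight model $M \models_t H$, observe that conditions 1 and 2 are untouched, and transfer $i_{M(G)}(evd)\downarrow$ to $i_{M(G)}(evd')\downarrow$ because the reduction $evd \mapsto^{*} evd'$ makes the two computations coincide (the paper's proof is exactly this, in two sentences). The only cosmetic slip is your closing appeal to strictness of $i_{M(G)}$, which points the wrong way ($i_{M(G)}(t)\downarrow \Rightarrow t\downarrow$); the transfer of convergence is already supplied by the operational-equivalence step you state just before, so nothing is lost.
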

\begin{proof}
  If $M \models_t H$ then $M(evd)$ is consistent with $M(G)$ so $(i_{M(G)}(evd)\downarrow)$. This implies
  $(i_{M(G)}(evd')\downarrow)$ since the computations are the same.
\end{proof}
\begin{mylemma}
  If $H \models G, evd$ is an evidence structure, and $evd$ is canonical or a principal variable, then
  $H \models G, evd$ matches one of the sixteen rules in Figure~\ref{fig:canonical} and Figure~\ref{fig:noncanonical}
  \label{lemma:match}
\end{mylemma}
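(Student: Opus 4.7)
The plan is a case analysis on the outermost form of $evd$, reading off in each case which of the sixteen rules it matches. Throughout, I will invoke Lemma~\ref{satisfiable} to supply a tight model $M \models_t H$, so $M(evd)$ is consistent with $M(G)$, i.e.\ $i_{M(G)}(M(evd))\downarrow$. The retracts from Lemma~\ref{lemma:finitary} are highly discriminating: the product retract converges only on terms that reduce to a pair, the sum retract only on $\inl{t}$ or $\inr{s}$, and the function retract only on $\lambda$-abstractions. Together with the inductive clauses of Definition~\ref{model1}, the top-level shape of $evd$ therefore forces the top-level connective of $G$ (or, in the non-canonical case, of the declared type of the principal variable).

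First, for the canonical case: if $evd = \mypair{e_1}{e_2}$, then $M(G)$ must be a product or a $\Sigma$-type, so $G$ is $G_1 \wedge G_2$ or $\myex{x}{G'}$, matching \TirName{$\wedge$pair} or \TirName{$\exists$pair}. If $evd = \cbvpair{d}{e}$ with $d \in \Var$, the call-by-value application inside the $\Sigma$-retract demands the first component already be a value, giving $G = \myex{x}{G'}$ (\TirName{$\exists$val pair}). The forms $\inl{t}$, $\inr{s}$, and $\lam{x}{b}$ analogously pin $G$ to a disjunction or a (dependent) function, matching the corresponding $\vee$- or $\lambda$-rule of Figure~\ref{fig:canonical}.

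Next, for the non-canonical case, let $x = \prin{evd}$; since consistency demands $M(evd)$ eventually reach a value, $x$ must be declared in $H$ with some type $T$. I then case on the operator wrapping $x$: $\decide{\underline{x}}{\dots}{\dots}$ forces $T = A \vee B$ (\TirName{decide}); $\spread{\underline{x}}{\dots}$ forces $T$ to be $A \wedge B$ or $\myex{z}{P}$ (\TirName{$\wedge$spread} or \TirName{$\exists$spread}); an application $\underline{f}(t)$ with the function in principal position forces $T$ to be $A \Rightarrow B$ or $\myall{z}{P}$, split further by the shape of $T$ and the presence of an $f = \const{v}$ constraint in $H$ into \TirName{apply const}, \TirName{$\Rightarrow$apply}, or \TirName{$\forall$apply}; and $\cbvap{f}{\underline{d}}$ with $d$ principal splits on whether a $\cbvap{f}{d} = t$ constraint is already in $H$, yielding \TirName{apply model} or \TirName{$\forall$cbv}. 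The degenerate subcase $evd = x$ matches \TirName{var}, after one also forces $T = G$.

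The main obstacle is justifying the shape-forcing claims above, notably the $T = G$ coincidence in the bare-variable subcase and the requirement that $T$ carry a specific top connective in the other non-canonical subcases. In each case the needed contradiction is obtained by exploiting the freedom in the construction of $M_{triv}$ together with the latitude to interpret each atomic relation $R_i$ by any finitary type (for example $\Unit$ or $\nsub{k}$), in order to build a specific tight model of $H$ in which the surrounding retract diverges on $M(x)$ whenever the expected top connective is absent. Given these model-theoretic distinguishers, the sixteen rules are exhaustive and the lemma follows.
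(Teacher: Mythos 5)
Your proposal is correct and follows essentially the same route as the paper: obtain a tight model via Lemma~\ref{satisfiable}, then use the strictness and discriminating shape of the retracts (together with the Assumption on primitive redexes) to force the top connective of $G$ in the canonical case, respectively of the principal variable's declared type in the non-canonical case, so that one of the sixteen rules must apply. If anything, you are slightly more explicit than the paper on a point it glosses over---namely that matching the \TirName{var} rule requires the declared type of the variable to coincide syntactically with $G$, which needs tight models that vary the interpretations of the atomic relations rather than only the single $\Unit$-valued model constructed in Lemma~\ref{satisfiable}.
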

\begin{proof}
  By Lemma~\ref{satisfiable} there is a tight model $M \models_t H$. Thus, $M(evd)$ is consistent with $M(G)$. If $evd$ is
  canonical, then $H \models G, evd$ must match one of the rules in Figure~\ref{fig:canonical} because the type $M(G)$ must be
  a product, union, or function type.

  If $evd$ has a principal variable $v$ then $v:\!T \in H$ for some $T\in \MFml$ and $M(v) \in M(T)$. Since $v$ is principal
  and $i_{M(G)}$ is strict, the computation $i_{M(G)}(M(evd))$ must reduce the subterm of $M(evd)$ containing $M(v)$.
  Since $M(T)$ must be a product, union, or function type, only a spread, decide, apply, or call-by-value apply redex can
  apply. Therefore one of the rules in Figure~\ref{fig:noncanonical} must match $H \models G, evd$.
\end{proof}
The preceding lemmas show that there is a well defined procedure that starts with the evidence
structure $( \models \psi, evd)$ constructed from uniform evidence for $\psi$ and recursively
builds a tree of evidence structures by alternating computation of $evd$ until it is canonical or has a principal variable with
matching the evidence structure against the sixteen derivation rules and applying the derivation.

It is routine to check that each derivation corresponds to a proof rule of minimal logic.
In our implementation of the proof procedure (shown in the Appendix) we
need only the evidence term $evd$ and the constraints (which we call the ``model'') because the
typing and the goal are just the current hypotheses and goal of the sequent being proved. From this information the
recursive Nuprl tactic decides which derivation rule to apply (or that it needs to compute the evidence term) and then
updates the evidence and constraints and uses one of the
primitive logical rules to get the next typing hypotheses and next goal term.

Our Theorem~\ref{thm:min} is proved once we establish that the
recursive procedure terminates.
\section{Termination of the Proof procedure}
We first show termination under the assumption that $evd$ is fully
normalized, which will establish Theorem~\ref{thm:normal}.
\begin{mylemma}
  If $evd$ is fully normalized then the evidence structure generation procedure terminates.
  \label{lemma:terminates}
\end{mylemma}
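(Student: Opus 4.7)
The plan is to proceed by well-founded induction on a structural measure of the evidence term $evd$. Since $evd$ is assumed to be fully normalized, every subterm is normal as well, and the ``compute to canonical form or principal variable'' step of the procedure is essentially a no-op: $evd$ is already canonical or has a principal variable at the topmost blocked position. Thus each step of the procedure applies a single derivation rule and produces derived evidence structures whose evidence terms I aim to show are strictly smaller under a suitable measure. For the seven canonical rules of Figure~\ref{fig:canonical}, this is immediate: the derived evidence terms ($evd_1, evd_2$ from $\mypair{evd_1}{evd_2}$, the body of $\lam{x}{evd}$, or the argument of $\inl{\cdot}$ and $\inr{\cdot}$) are proper subterms of $evd$.

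For the substitution-based non-canonical rules \TirName{decide}, \TirName{$\wedge$spread}, and \TirName{$\exists$spread}, the principal subterm is replaced by a strict subterm ($a$, $b$, or $t$), while a simple pattern ($\inl{x}$, $\inr{y}$, or $\mypair{x}{y}$) is substituted for the eliminated scrutinee variable throughout the rest of the evidence structure. Because $evd$ was normalized, any such substitution at most triggers the reduction of further $\decide$ or $\spread$ redexes sharing the same shape, and each collapse can only shrink the term. The \TirName{var} rule has no derived structure. The application rules require more care: \TirName{$\Rightarrow$apply} yields two derived structures, one with the argument $t$ (a proper subterm) and one replacing $f(t)$ by a fresh variable $v$; \TirName{$\forall$cbv} analogously replaces $\cbvap{f}{d}$ by a fresh $w$. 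Both strictly shrink $evd$.

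The rules \TirName{apply const} and \TirName{apply model} replace the principal by a compound pattern stored in a constraint. Here the stratification established by Lemma~\ref{lemma:ok}, together with the finiteness of $H$, is crucial: the variables appearing in a substituted pattern have strictly greater indices than the looked-up variable, so chains of successive lookups are bounded and contribute to a well-founded secondary measure. The main obstacle will be the \TirName{$\forall$apply} rule, which ``reshapes'' $\underline{f}(t)$ into $\cbvap{f}{t}$ without directly reducing size. I plan to handle this by observing that the principal position strictly shifts to inside $t$, so that the immediately following rule is necessarily \TirName{apply model}, \TirName{$\forall$cbv}, or a principal analysis of $t$, each of which decreases the primary measure or consumes a constraint. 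Packaging these into a lexicographic measure---syntactic size of $evd$, followed by constraint complexity in $H$, followed by a counter for available reshapes---then yields termination by well-founded induction.
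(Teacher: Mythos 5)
Your overall strategy---well-founded induction on a lexicographic measure of the evidence term, with the canonical rules handled by subterm descent and the elimination rules handled by controlling what pattern substitution can introduce---is the same as the paper's. But your choice of \emph{primary} measure, the syntactic size of $evd$, does not work, and this is exactly where the content of the lemma lives. Two places you wave through actually fail to decrease size. First, the rule \TirName{$\exists$pair} rewrites $\mypair{evd_1}{evd_2}$ to $\cbvpair{evd_1}{evd_2}$, which is not a proper subterm (it is, if anything, larger once the call-by-value pair is unfolded); your list of canonical rules silently omits it. Second, and more seriously, the rules \TirName{decide}, \TirName{$\wedge$spread}, \TirName{$\exists$spread}, \TirName{apply const}, and \TirName{apply model} substitute a pattern for a variable \emph{throughout} the evidence structure; if that variable occurs $k>1$ times, the term grows. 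So syntactic size is not even non-increasing, and demoting the difficulty to ``constraint complexity'' and a ``reshape counter'' cannot rescue a lexicographic argument whose first component can go up. The stratification of constraints, which you lean on for the lookup rules, is likewise not needed for termination and does not repair the primary component.

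The paper's fix is to choose the components by operator class rather than by size: first count the elimination operators ({\bf decide}, {\bf spread}, {\bf apply}), then call-by-value applies, then pairs, then $\inl{x}$, $\inr{y}$, and call-by-value pairs. Patterns contain nothing but pairs and injections, so substituting a pattern---however many copies---inflates only the low-priority components, while each elimination rule strictly removes one high-priority operator; \TirName{$\forall$apply} trades an {\bf apply} for a call-by-value apply and \TirName{$\exists$pair} trades a pair for a call-by-value pair, both lexicographic decreases. You should also make explicit the closing step the paper proves by induction on the size of the substituted pattern: after a substitution the term is no longer normalized, but the only redexes created are eliminations applied to patterns, and reducing them merely substitutes sub-patterns for variables, so renormalization terminates without increasing the measure. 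Your sentence that ``each collapse can only shrink the term'' gestures at this but is not an argument, and without it the hypothesis that the next evidence term is again normalized---which your whole induction relies on---is not re-established.
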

\begin{proof}
  Let $nc(evd)$ be the number of occurrences of {\bf decide}, {\bf spread}, or {\bf apply} operators in term $evd$.
  Let $cbv(evd)$ be the number of occurrences of the call-by-value apply operator.
  Let $npr(evd)$ be the number of occurrences of the $\mypair{x}{y}$ operator .
  Let $cn(evd)$ be the number of occurrences of the $\cbvpair{x}{y}$, $\inl{x}$, or $\inr{y}$ operators .

  We prove termination by induction on the lexicographically ordered tuple
  \[\langle nc(evd), cbv(evd), npr(evd), cn(evd) \rangle\]

  Each rule in Figure~\ref{fig:canonical} changes $evd$ to a subterm of $evd$ and removes at least one of the counted operators except for
  rule \TirName{$\exists$pair} which changes a $\mypair{x}{y}$ into a $\cbvpair{x}{y}$, so the measure decreases in each of these steps.

  Some of the rules in Figure~\ref{fig:noncanonical} reduce the measure by replacing a subterm of $evd$ that includes a
  {\bf decide}, {\bf spread}, or {\bf apply} operator by a variable and then substituting a pattern into the result.
  This reduces the $nc(evd)$ count and may increase only the $npr(evd)$ and $cn(evd)$ counts because patterns have only
  $\mypair{x}{y}$ , $\inl{x}$, or $\inr{y}$ operators .
  Thus, in every case it is easily checked that the measure decreases.

  It remains to show that in the computation steps that
  compute $evd$ until it is canonical or has a principal variable we can in fact fully normalize the evidence term and
  that this will not increase the measure.

  If $evd$ is fully normalized, then the only rules which derive evidence that may not be fully normalized are those that
  substitute a pattern. The resulting term $evd'$, which has some pattern $ptn$ in some places where $evd$ had a variable,
  can contain only spread, decide,  apply, or call-by-value apply redexes. When these are reduced, they result only in
  sub-patterns of $ptn$ being substituted for variables. Thus, by induction on the size of $ptn$ we can show that
  normalization of $evd'$ terminates and does not increase the measure.
\end{proof}

The proof of termination for the general case where $evd$ is not
assumed to be fully normalized uses Brouwer's Fan Theorem. For that
proof we need the following definitions and lemmas.
\begin{mydef}
  A derivation rule is {\em constant domain} if it does not add a new variable $d_i:\!D$ to the derived contexts.
\end{mydef}
All of the derivation rules in Figures~\ref{fig:canonical} and
~\ref{fig:noncanonical} are constant domain except for the rules
\TirName{$\forall\lambda$} and \TirName{$\exists$spread}.
\begin{mydef}
  A derivation is a $\psi$-deriviation if it is an instance of one of the derivation rules where
  the formulas in the context $H$ and goal $G$ are instances of subformulas of $\psi$.
\end{mydef}

\begin{mydef}
  Context $H'$ is a {\em constant domain $\psi$-extension} of context $H$ (written $H <_{cd}^{\psi} H'$)
  if $H'$ can be obtained from $H$ by applying $\psi$-derivations that are instances of
  constant domain derivation rules.

  Context $H$ is a {\em maximal $\psi$-context} if there is no proper $H'$ with $H <_{cd}^{\psi} H'$.
\end{mydef}
\begin{mylemma}
 For any formula $\psi \in \MFml$, and any context $H$ there are only finitely many $H'$ such that
 $H <_{cd}^{\psi} H'$
  \label{lemma:cdext}
\end{mylemma}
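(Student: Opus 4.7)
The plan is to bound the cardinality of any extension $H'$ in terms of $H$ and $\psi$, and then appeal to the fact that only finitely many contexts of bounded size can be assembled from the finite set of subformulas of $\psi$ together with the canonical least-index naming of fresh variables.

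First I would record two structural invariants of constant-domain derivations. The number $k$ of declarations $d:\!D$ is preserved, because the only rules that add such declarations --- $\forall\lambda$ and $\exists$spread --- are excluded by definition. Second, every new $v:\!T$ declaration in any reachable $H'$ has $T$ equal to a subformula of $\psi$, since each constant-domain rule introduces variables whose types are either subformulas of the current goal (in $\Rightarrow\lambda$) or sub-components of an existing hypothesis type (in decide, $\wedge$spread, $\Rightarrow$apply, $\forall$cbv). Thus the alphabet of types appearing in extensions is finite.

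Next I would introduce a potential $\mu_k$ on subformulas of $\psi$ by structural induction, with the two critical clauses $\mu_k(A \Rightarrow B) = 1 + \mu_k(B)$ and $\mu_k(\myall{z}{P}) = 1 + k\cdot\mu_k(P)$; atomic cases receive value $1$, conjunction is additive, disjunction uses $\max$, and $\mu_k(\myex{z}{P}) = 1$ because no constant-domain rule decomposes an existential. The first clause is justified by the uniqueness precondition of $\Rightarrow$apply, which fires at most once per hypothesis $f:\!A \Rightarrow B$ and contributes a single new $v:\!B$; the second by the uniqueness precondition of $\forall$cbv, which fires at most once per pair $(f:\myall{z}{P},d:\!D)$ and so contributes at most $k$ new hypotheses $w:\!\subst{P}{z}{d}$ per universal. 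A case analysis over the remaining constant-domain rules then shows that every derivation step either strictly decreases $\Phi(H) := \sum_{v:T \in H} \mu_k(T)$ by decomposing a hypothesis ($\wedge$spread, decide), consumes a uniqueness slot from one of the finitely many implicational or universal hypotheses present ($\Rightarrow$apply, $\forall$cbv), or installs a new hypothesis from the goal via $\Rightarrow\lambda$, whose contribution is absorbed into a separate budget of size at most $\mu_k(\psi)$ allocated to processing the evolving goal.

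Combining these budgets bounds $|H'|$ by a computable function of $|H|$, $\psi$, and $k$; since every $v:\!T$ in $H'$ has $T$ drawn from the finite set of subformulas of $\psi$ and fresh names are chosen canonically by the least-fresh-index convention stipulated after Figure~\ref{fig:noncanonical}, the set of possible $H'$ is finite. The main obstacle will be properly handling the cascade in which $\Rightarrow$apply produces a new hypothesis that is itself an implication or universal, inviting further applications; the recursive definition of $\mu_k$ is designed precisely to absorb this cascade, and checking that $\Rightarrow\lambda$ --- which transplants a goal subformula into the context as a new hypothesis whose own cascade must then be bounded --- fits within the $\mu_k(\psi)$ goal-budget is the key verification.
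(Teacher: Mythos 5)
Your proposal is correct and follows essentially the same route as the paper's proof: both arguments rest on the fixed finite set $D_0$ of domain variables (preserved because the non-constant-domain rules are excluded), the observation that every declaration type in an extension is a subformula of $\psi$ instantiated over $D_0$, and the uniqueness preconditions of \TirName{$\Rightarrow$apply} and \TirName{$\forall$cbv} to rule out unbounded repetition. The paper presents this as an informal saturation argument, while you make the counting explicit with the potential $\mu_k$; that is extra bookkeeping rather than a different method.
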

\begin{proof}
  Let $D_0$ be the set of variables $d_i:\!D \in H$.
 Repeated application of the constant domain $\psi$-derivations will add new declarations and constraints
 $v:\!P(d); \cbvap{f}{d} = v$ for all the universally quantified declarations $f:\!\myall{x}{P(x)}$ and every
 $d\in D_0$.  These new declarations will in turn be instantiated with every $d \in D_0$. Any declarations
 of the form $v:\! A \vee B$ will generate two derived extensions
 where $v$ is replaced by either $\inl x$ for $x:\!A$ or by $\inr y$ for $y:\!B$.
 Declarations of the form $p:\! A \wedge B$ will be replaced by $x:\!A; y:\!B$ and $p$ replaced by $\mypair{x}{y}$.
 Every subformula of $\psi$ may be added to the context with its free variables replaced by members of $D_0$.
 But for a finite $D_0$ and fixed formula $\psi$ there are only finitely many such extensions.
\end{proof}
\begin{mycor}
  For any context $H$ there is a finite, non-empty set of maximal $\psi$-contexts $H'$ such that
 $H <_{cd}^{\psi} H'$
\end{mycor}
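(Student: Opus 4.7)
The plan is to deduce both claims in the corollary directly from Lemma~\ref{lemma:cdext}. For finiteness, observe that any maximal $\psi$-context $H'$ extending $H$ is in particular a constant domain $\psi$-extension of $H$. Lemma~\ref{lemma:cdext} guarantees that only finitely many such extensions exist, so the set of maximal $\psi$-contexts extending $H$ is contained in a finite set and is therefore finite.

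For non-emptiness, the idea is to produce a maximal extension by iterating constant domain derivations as long as they continue to strictly enlarge the context. Starting from $H_0 = H$, if $H_0$ is already maximal we are done; otherwise, pick any $H_1$ with $H_0 <_{cd}^{\psi} H_1$ and $H_1 \neq H_0$, and repeat with $H_1$ in place of $H_0$. Since every such chain $H_0, H_1, H_2, \dots$ consists of distinct elements of the finite set $\{H' \mid H <_{cd}^{\psi} H'\}$, it must terminate. The terminal context is, by definition of the procedure, maximal.

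To make this argument constructively acceptable, I would note that deciding whether a given $H_i$ is already maximal is effective: the constant domain derivation rules are few and syntactically driven, so we can enumerate every candidate next step obtained by applying some rule to some declaration in $H_i$ (of which there are finitely many) and check whether any of them yields a context properly larger than $H_i$. Since equality of contexts (as finite lists of declarations and constraints, modulo renaming of freshly introduced variables from $\Var'$) is decidable, the test ``is $H_i$ maximal?'' is decidable, so the iterative construction is effective.

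The only substantive point is confirming termination, which is immediate from Lemma~\ref{lemma:cdext}: the strictly increasing chain lives inside a finite set and hence has bounded length. No appeal to a well-foundedness principle beyond the finiteness already established by the lemma is needed, so the corollary follows with no further technical ingredient.
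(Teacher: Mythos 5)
Your argument is correct and is essentially the one the paper intends: the corollary is stated without proof as an immediate consequence of Lemma~\ref{lemma:cdext}, and your elaboration (finiteness because maximal extensions form a decidable subset of the finite set of all constant domain $\psi$-extensions; non-emptiness by iterating derivations until no proper extension remains, with termination guaranteed by that same finiteness) is the natural way to fill in the omitted details. Your attention to the constructive points --- decidability of maximality via enumerating the finitely many applicable rule instances --- is a welcome addition rather than a deviation.
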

\newcommand\SM{\mathcal{SM}(\psi)}
\begin{mydef}
  The {\em one step $\psi$-extension} of $H$ is obtained from $H$ by adding $d_i:\!D$ for
  the least $i\in\mathbb{N}$ for which $d_i$ is not in $H$, then applying the \TirName{$\exists$spread} rule to
  add new domain elements for every existentially quantified formula in $H$.

  Context $H'$ is a {\em next $\psi$-extension} of $H$ if it is a maximal constant domain $\psi$-extension
  of the one step $\psi$-extension of $H$.

  $\SM$, the {\em spread of symbolic models} of $\psi$ is the tree with the empty context at the root
  and the successors of node $H$ being the next $\psi$-extensions of $H$.
\end{mydef}
An infinite path through $\SM$ describes a freely-chosen model $M$
with $M(D) = \Var$. In this model the evidence term $evd$ must
compute evidence for $M(\psi)$ and we use the termination of this
computation to bar the spread $\SM$. Brouwer's Fan Theorem then
gives a uniform bar and this implies that our proof procedure
terminates on $evd$ and produces a minimal logic proof of $\psi$.
\begin{mydef}
  Let $\alpha$ be an infinite path in $\SM$.
  The computation $c(\alpha,\psi,evd, n)$ where $n > 0$, is defined by
  computing $evd$ in the context $\alpha(n)$ (a maximal context along path $alpha$)
  to a term $evd'$ that is canonical or has a principal variable. The computation proceeds by cases:
  \begin{itemize}
    \item if $evd'$ is a variable $v$ and $v:\!\psi$ is in the context then halt and return $n$.
    \item If $evd'$ is $\inl evd_1$ and $\psi = \psi_1 \vee \psi_2$ then the computation proceeds
  with $c(\alpha, \psi_1, evd_1,n)$.
    \item If $evd'$ is $\inr evd_2$ and $\psi = \psi_1 \vee \psi_2$ then the computation proceeds
  with $c(\alpha, \psi_2, evd_2,n)$.
\item If $evd'$ is $\mypair{evd_1}{evd_2}$ and $\psi = \psi_1 \wedge \psi_2$ then return the
  maximum of the dovetailed or sequential computation of both $c(\alpha, \psi_1, evd_1,n)$ and $c(\alpha, \psi_2, evd_2,n)$.
\item If $evd'$ is $\lambda x.\, evd_1$ and $\psi = \psi_1 \Rightarrow \psi_2$ then since the context $\alpha(n)$ is
  maximal there is a declaration $v:\! \psi_1$, so proceed with $c(\alpha, \psi_2, evd_1[x:= v],n)$.
\item If $evd'$ is $\lambda x.\, evd_1$ and $\psi = \myall{x}{\psi_2}$ then in $\alpha(n+1)$ a fresh $d_j:\!D$ was added
  so proceed with $c(\alpha, \psi_2[x:= d_j], evd_1[x:= d_j],n+1)$.
\item If $evd'$ has a principal variable $v$ that is the argument to a ${\bf decide}$ operator, then
  the maximal context specifies that $v= \inl{x}$ or $v = \inl{y}$, so replace $v$ and proceed with
  the computation.
\item If $evd'$ has a principal variable $v$ that is the argument to a ${\bf spread}$ operator, then
  if the maximal context specifies that $v = \mypair{x}{y}$ replace $v$ and proceed with
  the computation (this must happen if $v:\!A \wedge B$ in the context).
  Otherwise, if $v:\! \myex{z}{P(z)}$ in the context then in the next context, $\alpha(n+1)$,
  $v$ will be replaced by a pair $\mypair{d_j}{w}$ where $d_j:\!D$ is new and
  $w:\!P(d_j)$, so replace $v$ by $\mypair{d_j}{v}$ to get $evd''$ and proceed with $c(\alpha,\psi, evd'',n+1)$.
\item If $evd'$ has a principal variable $v$ where the principal subterm is $v(d_n)$ then $d_n:\! D$ is in the context,
  and since the context is maximal there is a constraint
  $v(d_n) = w$ in the context, so replace the subterm $v(d_n)$ with $w$ and proceed with the computation.
\item Otherwise abort the computation.
  \end{itemize}
\end{mydef}
\begin{mylemma}
  If $evd$ is uniform evidence for $\psi$ then the computation $c(\alpha,\psi,evd, n+1)$ converges.
  \label{lemma:conv}
\end{mylemma}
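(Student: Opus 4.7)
The plan is to prove convergence of $c(\alpha,\psi,evd,n{+}1)$ by well-founded induction on a measure on $(\psi,evd)$, using uniformity of $evd$ at two critical points. Fix an infinite path $\alpha$ through $\SM$. By Lemma~\ref{satisfiable} each context $\alpha(k)$ has a tight finitary model $M_k$, and by the corollary to Lemma~\ref{lemma:finitary} the type $M_k(\psi)$ is finitary. Uniformity gives $evd\in M_k(\psi)$, so $M_k(evd)$ is consistent with $M_k(\psi)$ and $i_{M_k(\psi)}(M_k(evd))\downarrow$. This makes the opening ``compute $evd$ until canonical or has a principal variable'' step of $c$ terminate, because strictness of $i_{M_k(\psi)}$ forces each leading redex of $evd$ to contract; it also rules out the ``otherwise abort'' branch, since the canonical shape of $M_k(evd')$ must match the canonical shape of $M_k(\psi)$, and the enumerated cases in the definition of $c$ cover exactly those matches.

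I would then induct on the lexicographic pair $\langle |\psi|,\mu(evd)\rangle$, where $|\psi|$ is the size of $\psi$ and $\mu(evd)$ is the tuple measure from the proof of Lemma~\ref{lemma:terminates}. The canonical branches ($\inl$, $\inr$, $\mypair{}{}$, and $\lambda$ for both $\Rightarrow$ and $\forall$) recurse on a strict subformula $\psi_i$ of $\psi$, so $|\psi|$ strictly drops; the conjunction case is handled by noting that the max of two convergent subcomputations converges. The principal-variable branches for \textbf{decide}, $\wedge$-\textbf{spread}, \textbf{apply}-with-constraint, and the $\cbvap{f}{d}=w$ rule keep $\psi$ fixed but substitute a pattern (or $w$) for the principal variable and then re-enter the compute loop, which contracts one \textbf{decide}, \textbf{spread}, or \textbf{apply}; exactly as in Lemma~\ref{lemma:terminates}, this strictly decreases $\mu(evd)$.

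The case I expect to be the main obstacle is the $\exists$-\textbf{spread} branch in which the principal $v:\!\myex{z}{P}$ carries no pair-constraint in $\alpha(n{+}1)$: we advance to $\alpha(n{+}2)$, introduce a fresh $d_j:\!D$ and a fresh $w:\!\subst{P}{z}{d_j}$, and substitute $\mypair{d_j}{w}$ for $v$. Naively $\mu$ can temporarily grow because a new pair constructor is inserted. The fix I have in mind is to refine the measure by attaching to each free variable of $evd$ the size of its governing hypothesis type in the current context; unpacking $v$ replaces the declaration $v:\!\myex{z}{P}$ with $w:\!\subst{P}{z}{d_j}$, strictly shrinking that size, and the outer \textbf{spread} contracts in the very next compute step since its principal is now literally a pair. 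With this refinement every branch of $c$ either returns an index or recurses with a strictly smaller measure, so the induction closes and $c(\alpha,\psi,evd,n{+}1)$ converges.
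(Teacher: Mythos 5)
Your argument has a genuine gap at its core: the well-founded induction on $\langle |\psi|,\mu(evd)\rangle$ borrows the measure from Lemma~\ref{lemma:terminates}, but that measure is only shown to decrease under the standing hypothesis of that lemma, namely that $evd$ is \emph{fully normalized}. Lemma~\ref{lemma:conv} exists precisely to handle the general case, where $evd$ may contain untyped ``dead code'' and arbitrary latent redexes. In the branches of $c$ that resolve a principal variable you must re-enter the ``compute until canonical or has a principal variable'' loop, and for non-normalized evidence a single $\beta$-step $(\lambda x.B(x))(t)\mapsto B(t)$ can duplicate $t$, so the counts of ${\bf apply}$, ${\bf spread}$, ${\bf decide}$ and pair occurrences can strictly increase. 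The paper's proof of Lemma~\ref{lemma:terminates} relies explicitly on the fact that, starting from normalized evidence, the only redexes ever created are benign pattern substitutions; without that hypothesis your measure is not monotone under computation and the induction does not close. Your refinement for the $\exists$-${\bf spread}$ case repairs a minor bookkeeping issue but not this one. A further warning sign: your measure depends only on $(\psi,evd)$ and not on $\alpha$, so if the induction worked it would bound the depth of every branch uniformly and make the Fan Theorem (invoked in the corollary immediately following) unnecessary --- contrary to the paper's stated position that the non-normalized case seems to require it.

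The paper's own proof is a one-line semantic argument, not a syntactic induction: the infinite path $\alpha$ through $\mathcal{SM}(\psi)$ determines a single freely chosen model $M$ with $M(D)=\Var$; uniformity gives $evd\in M(\psi)$; and the computation $c(\alpha,\psi,evd,n+1)$ is exactly the computation witnessing that membership, unfolding $evd$ against the type $M(\psi)$, so it converges because membership in a type means the term computes to a canonical value of the right shape at every level. Your opening paragraph (strictness of the retraction forces the head redexes to contract, and the canonical shape must match that of $M(\psi)$, ruling out the abort branch) is the right kind of reasoning and is essentially Lemmas~\ref{lemma:compute} and~\ref{lemma:match}; the correct move is to iterate that semantic argument along the structure of $M(\psi)$ in the one model given by $\alpha$, rather than to seek a path-independent syntactic measure.
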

\begin{proof}
  Any path $\alpha$ through $\SM$ defines a model $M$ in which $evd \in M(\psi)$
\end{proof}
\begin{mycor}
  The proof procedure terminates and this establishesTheorem~\ref{thm:min}
\end{mycor}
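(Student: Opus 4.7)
The plan is to combine Lemma~\ref{lemma:conv} with Brouwer's Fan Theorem to get a uniform bound on the depth of the proof-procedure tree, and then translate that bound into finiteness of the derivation built by the procedure. The first observation is that the spread $\SM$ is in fact a fan: by the corollary to Lemma~\ref{lemma:cdext}, each node has only finitely many next $\psi$-extensions, so the branching is finite at every node. By Lemma~\ref{lemma:conv}, along every infinite path $\alpha$ through $\SM$ the computation $c(\alpha,\psi,evd,n+1)$ converges for some $n$, so the set of nodes at which this computation has halted forms a bar on $\SM$.

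Next I would apply Brouwer's Fan Theorem to this bar on the fan $\SM$ to obtain a uniform bound $N$: every path $\alpha$ is barred by an initial segment of length at most $N$. The point is that, since the path through $\SM$ is freely chosen, uniformity in $\alpha$ is precisely what the Fan Theorem provides and what makes the argument intuitionistic rather than merely classical (where K\"onig's Lemma would do).

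Then I would translate the uniform bound back to the evidence-structure procedure of the previous section. At each derivation step the procedure either reduces the evidence term to canonical form or matches one of the sixteen rules in Figures~\ref{fig:canonical}--\ref{fig:noncanonical}; by Lemmas~\ref{lemma:compute} and~\ref{lemma:match} this step is always well defined. The sequence of choices made by the procedure along any branch corresponds to an initial segment of some path in $\SM$: case analyses on \textsc{decide}, \textsc{spread}, \textsc{$\exists$spread}, \textsc{$\forall\lambda$} and the application of universally quantified hypotheses are precisely the moves that generate next $\psi$-extensions in $\SM$. Consequently, the procedure's tree embeds into a finitely branching subtree of $\SM$ whose every branch is bounded by $N$, so it is a finite tree. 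Each leaf matches the \textsc{var} rule (this is the only way the computation $c$ halts), and every internal derivation step corresponds to a rule of minimal logic, so reading the tree bottom-up yields a finite minimal-logic proof of $\psi$, completing Theorem~\ref{thm:min}.

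The main obstacle I expect is making the correspondence between paths in $\SM$ and branches of the proof procedure fully precise. The spread $\SM$ is defined by eagerly forming maximal constant-domain extensions and interleaving with fresh domain elements, whereas the procedure acts lazily, driven by the shape of the current $evd$; one must check that any branch of the procedure respects the ordering built into $\SM$ and that termination of $c$ along the corresponding path implies the procedure has reached the \textsc{var} rule. Once this correspondence is established, the conclusion follows from the Fan Theorem together with the validity and matching lemmas already proved.
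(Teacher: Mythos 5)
Your proposal is correct and follows essentially the same route as the paper: use Lemma~\ref{lemma:conv} to bar the fan $\SM$, invoke Brouwer's Fan Theorem for a uniform bound $N$, and observe that every branch of the (finitely branching) tree of evidence structures is majorized by $c(\alpha,\psi,evd,1)$ along some path $\alpha$, hence the tree is finite and reads off as a minimal-logic proof. The correspondence between procedure branches and paths in $\SM$ that you flag as the main obstacle is exactly the step the paper asserts without further elaboration, so your account is, if anything, more explicit about where the work lies.
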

\begin{proof}
  For any $\alpha$, the computation $n = c(\alpha,\psi,evd, 1)$ converges.
  This defines a bar on the fan $\SM$. By Brouwer's theorem, there is a uniform bar $N$.
  The length of any branch in the tree of evidence structures produced by the proof procedure is bounded by
  $c(\alpha,\psi,evd, 1)$ for some path $\alpha$. Thus the height of the tree of evidence structures is bounded by $N$.
  Since it is finitely branching, it is finite.
\end{proof}

We have implemented the proof procedure as a tactic in Nuprl and tested it on a number of examples.
We can construct evidence terms from the extracts of Nuprl proofs or construct them by hand. We can then
modify the evidence terms using any operators we like so that the resulting term is computationally equivalent
to the original. Thus we can introduce abbreviations (which is equivalent to using the \TirName{cut} rule) and
use operators such as $\pi_1$ and $\pi_2$ (which Nuprl displays as {\tt fst} and {\tt snd} as in ML) and
$\ite{c}{a}{b}$ that are defined in terms of the primitive ${\bf spread}$ and ${\bf decide}$ operators.
In appendix~\ref{ap1} we show one such example and describe the implementation of the tactic.

\section{Observations and Corollaries}
If $evd_1$ is uniform evidence for $\psi_1$ and $evd_2$ is uniform evidence for $\psi_1 \Rightarrow \psi$ then
the application $evd_2(evd_1)$ is uniform evidence for $\psi$. This observation gives us a semantic proof of cut elimination for
first order minimal logic.
\begin{mylemma}
  If $\psi \in \MFml$ is provable in minimal logic with the cut rule ($\vdash_{MLC} \psi$) then $\vdash_{ML} \psi$
  \label{cut elim}
\end{mylemma}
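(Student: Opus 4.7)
The plan is to use Theorem~\ref{thm:min} to reduce cut elimination to a simple semantic observation, bypassing the traditional syntactic cut-elimination argument. The key fact is the one highlighted in the text just before the statement: application of uniform evidence produces uniform evidence.

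First, I would prove by induction on the structure of a derivation in minimal logic with cut that every provable $\psi \in \MFml$ has a uniform evidence term, i.e.\ a member of $\isect{M\in\Struct}{M(\psi)}$. For the standard rules of minimal logic this is immediate from the BHK reading of each rule: pairing realizes $\wedge$-introduction, projection realizes $\wedge$-elimination, $\lambda$-abstraction realizes $\Rightarrow$-introduction and $\forall$-introduction, and the analogous constructions handle $\vee$ and $\exists$. Uniformity is preserved throughout because none of these evidence constructors inspect the underlying structure $M$; they are built purely from the logical operators.

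The interesting case is the cut rule. A cut step derives $\psi$ from $\psi_1$ and $\psi_1 \Rightarrow \psi$ (or more generally from a subproof of $\psi_1$ and a subproof of $\psi$ from hypothesis $\psi_1$, which amounts to the same thing after $\Rightarrow$-introduction). By the induction hypothesis we have uniform evidence terms $evd_1 \in \isect{M\in\Struct}{M(\psi_1)}$ and $evd_2 \in \isect{M\in\Struct}{M(\psi_1 \Rightarrow \psi)}$. For every $M \in \Struct$, $M(evd_2)$ is a function in $M(\psi_1) \rightarrow M(\psi)$ and $M(evd_1) \in M(\psi_1)$, so $evd_2(evd_1)$ lies in $M(\psi)$; since this holds uniformly in $M$, the term $evd_2(evd_1)$ is a member of the intersection $\isect{M\in\Struct}{M(\psi)}$, giving uniform evidence for $\psi$.

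Having produced uniform evidence for $\psi$, I invoke Theorem~\ref{thm:min}: since $\psi \in \MFml$ and $\uall{M}{\Struct}{M \models \psi}$ holds, we conclude $\vdash_{ML} \psi$, i.e.\ $\psi$ is provable in cut-free minimal logic. There is no real obstacle here; the delicate work is all hidden inside Theorem~\ref{thm:min}, whose proof procedure converts the (possibly non-normal) evidence term $evd_2(evd_1)$ into a cut-free proof by analyzing its evidence structure. It is worth noting that the argument is purely semantic: we never need to push cuts upward, perform principal-cut reductions, or induct on cut-formula complexity; the heavy lifting is done once, globally, by the $P\!r\!f$ procedure.
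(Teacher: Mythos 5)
Your proposal is correct and follows essentially the same route as the paper: extract uniform evidence from the cut-bearing proof (using the observation that applying uniform evidence for $\psi_1 \Rightarrow \psi$ to uniform evidence for $\psi_1$ yields uniform evidence for $\psi$), then invoke Theorem~\ref{thm:min} to recover a cut-free minimal logic proof. You merely spell out the induction on the derivation that the paper leaves implicit in the phrase ``the evidence term extracted from the proof.''
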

\begin{proof}
  The evidence term extracted from the proof $\vdash_{MLC} \psi$ is uniform evidence for $\psi$. ByTheorem~\ref{thm:min},
  $\vdash_{ML} \psi$
\end{proof}

\newpage
\section{Appendix}
\label{ap1}

\begin{figure}[h]
  \begin{center}
  \begin{minipage}[b]{\linewidth}
    \footnotesize

\begin{program*}
\>\\
\>\mvdash{} \mforall{}[A,D:Type]. \mforall{}[R,Eq:D {}\mrightarrow{} D {}\mrightarrow{} \mBbbP{}].\\
\>    ((\mforall{}x,y,z:D.  (R[x;y] {}\mRightarrow{} (R[y;z] \mvee{} Eq[y;z]) {}\mRightarrow{} R[x;z]))\\
\>    {}\mRightarrow{} (\mforall{}x:D. (R[x;x] {}\mRightarrow{} A))\\
\>    {}\mRightarrow{} (\mforall{}x:D. \mexists{}y:D. R[x;y])\\
\>    {}\mRightarrow{} (\mexists{}m:D. \mforall{}x:D. ((Eq[x;m] {}\mRightarrow{} A) {}\mRightarrow{} R[x;m]))\\
\>    {}\mRightarrow{} A)\\
\> \\
\>BY EvidenceTac \mkleeneopen{}\mlambda{}Trans,Irr,Unbdd,MxEx.\\
\>                  let m = fst(MxEx) in\\
\>                   let bounds = snd(MxEx) in\\
\>                   let y,ygtr = Unbdd m \\
\>                   in let loop = Trans m y m ygtr in\\
\>                       let F = \mlambda{}x.(Irr m (loop x)) in\\
\>                       F (inl (bounds y (\mlambda{}eq.(F (inr eq )))) )\mkleeneclose{}\mcdot{}\\
\>   THENA Auto\\
\>
\end{program*}
  \end{minipage}
  \end{center}
  \caption{Example minimal logic proof from evidence}
  \label{fig:exm}
\end{figure}
This example shows how equality can be represented as an atomic relation symbol. The formula states (in minimal logic) that
an irreflexive, transtitive relation that is unbounded cannot have a maximal element. We have introduced a number of abbreviations
into the evidence term to illustrate the fact that the proof procedure does not require normalized terms.

The tactic {\tt EvidenceTac}  is shown in Figure~\ref{fig:tactic}.
It uses the evidence to generate the proof. In Nuprl, some of the
primitive rules of minimal logic (hypothesis, and, or, implies,
forall, exists introduction and elimination) create auxilliary
subgoals to show that the rules have been applied to well-formed
propositions. In the proof in Figure~\ref{fig:exm} the tactic {\tt
THENA Auto} is used to prove these auxilliary goals.

\begin{figure}[!ht]
  \begin{center}
  \begin{minipage}[b]{\linewidth}
    \footnotesize
\begin{program*}
\>let EvidenceTac evd  =\\
\>   -- helper functions here -- \\
\>letrec evdProofTac M evd  p =\\
\>  let op = opid\_of\_term evd in\\
\>  if member op ``variable pair inl inr lambda`` then\\
\>      canonical op M evd p\\
\>  else\\
\>  let t = get\_principal\_arg\_with\_context evd in\\
\>  if is\_variable\_term (subtermn 1 t) then\\
\>   let op = opid\_of\_term t in\\
\>   if member op ``spread decide callbyvalue apply`` then\\
\>       noncanonical op t M evd p\\
\>   else (AddDebugLabel `arg not reducible` p)     \\
\>  else let evd' = apply\_conv (ComputeToC []) evd in\\
\>        if alpha\_equal\_terms evd' evd then Id p\\
\>        else evdProofTac M evd' p\\
\>      \\
\>in Repeat UniformCD\\
\>   THEN evdProofTac [] evd \\
\>;;\\
\>
\end{program*}
  \end{minipage}
  \end{center}
  \caption{Tactic code for Proof from Uniform Evidence}
  \label{fig:tactic}
\end{figure}
The basic structure of the tactic is to take off the uniform quantifiers and then start the proof procedure from evidence.
If the evidence is canonical it uses one of the rules for that case, otherwise if there is a principal variable it uses one of
the rules for non-canonical evidence, and otherwise it computes the evidence term.
\begin{figure}[!ht]
  \begin{center}
  \begin{minipage}[b]{\linewidth}
    \footnotesize
  \begin{program*}
\>let UniformCD p  = if is\_term `uall` (concl p) \\
\>                   then  (D 0 THENA Auto) p \\
\>                   else  Fail p in\\
\>let mk\_cbv\_pair t1 t2 = \\
\>    subst [`x',t1;`y',t2] \mkleeneopen{}let a := x in\\
\>                           <a, y>\mkleeneclose{} in\\
\>let mk\_cbv\_ap fun arg = \\
\>    subst [`arg',arg;`f',fun] \mkleeneopen{}let a := arg in\\
\>                               f a\mkleeneclose{} in\\
\>let do\_update v pattern redex result evd M = \\
\>     let sub = [v, pattern] in\\
\>     subst sub (replace\_subterm redex result evd),\\
\>     map (\mbackslash{}(ap,val). (ap, subst sub val)) M in\\
\>let lookup M t = \\
\>   let test (ap, val) = \\
\>     if alpha\_equal\_terms ap t then val else fail in\\
\>   inl (first\_value test M) ? inr () in\\
  \end{program*}
  \end{minipage}
  \end{center}
  \caption{Code for helper functions}
  \label{fig:helper}
\end{figure}
The helper code include the tactic for taking off a uniform quantifier, functions for forming the call-by-value pair and apply terms,
and code for substituting a pattern into the evidence and constraints (here called the model) in order to eliminate a redex from the
non-canonical evidence. The lookup function checks for the existence of a constraint on a given apply term from the evidence.

\begin{figure}[!ht]
  \begin{center}
  \begin{minipage}[b]{\linewidth}
    \footnotesize
  \begin{program*}
\>canonical op M evd p =\\
\> if op = `variable` then \\
\>      let x = dest\_variable evd in \\
\>        let n = get\_number\_of\_declaration p x in NthHyp n p \\
\> else if op = `pair` then\\
\>    let evd1,evd2 = dest\_pair evd in\\
\>    if is\_term `and` (concl p)  then\\
\>     (D 0 THENL [evdProofTac M evd1; evdProofTac M evd2]) p\\
\>    else if is\_term `variable` evd1 then\\
\>         (With evd1 (D 0) THENM (evdProofTac M evd2)) p\\
\>    else (evdProofTac M (mk\_cbv\_pair evd1 evd2)) p\\
\> else if op = `inl` then \\
\>      (OrLeft THENM (evdProofTac M (dest\_inl evd))) p\\
\> else if op = `inr` then \\
\>      (OrRight THENM (evdProofTac M (dest\_inr evd))) p \\
\> else let x,t = dest\_lambda evd in\\
\>      let z = maybe\_new\_var x (declared\_vars p) in\\
\>      let evd1 = if z = x then t else subst [x, mvt z] t in\\
\>      SeqOnM [D 0 ; RenameVar z (-1); evdProofTac M evd1] p\\
  \end{program*}
  \end{minipage}
  \end{center}
  \caption{Code for canonical evidence}
  \label{fig:ccase}
\end{figure}
The code for the canonical case comes from the rules in
Figure~\ref{fig:canonical}. In each case, the corresponding proof
rule of the logic is invoked with the tactic {\tt D 0}. To make life
easier for the users, Nuprl has organized all the primitive rules
into one tactic named {\tt D} (for decompose). The number {\tt 0}
indicates that we are applying a primitive rule to decompose the
conclusion of the sequent rather than one of the hypotheses. This is
because the canonical evidence always indicates that the next proof
step is an introduction rule.

\begin{figure}[!ht]
  \begin{center}
  \begin{minipage}[b]{\linewidth}
    \footnotesize
  \begin{program*}
\>noncanonical op t M evd p =\\
\> if op = `spread`  then\\
\>     let t1,bt = dest\_spread t in\\
\>     let v = dest\_variable t1 in\\
\>     let [x;y],body = rename\_bvs p bt in\\
\>     let pattern = mk\_pair\_term (mvt x) (mvt y) in \\
\>     let evd1, M' = do\_update v pattern t body evd M in\\
\>     let n = get\_number\_of\_declaration p v in \\
\>       Seq [D n\\
\>           ; RenameVar x n\\
\>           ; RenameVar y (n+1)\\
\>           ; evdProofTac M' evd1] p\\
\>  else if op = `decide`  then\\
\>     let t1, bt1, bt2 = dest\_decide t in\\
\>     let v = dest\_variable t1 in\\
\>     let [x],case1 = rename\_bvs p bt1 in\\
\>     let pattern1 = mk\_inl\_term (mvt x) in\\
\>     let evd1, M1 = do\_update v pattern1 t case1 evd M in\\
\>     let [y],case2 = rename\_bvs p bt2 in\\
\>     let pattern2 = mk\_inr\_term (mvt y) in\\
\>     let evd2, M2 = do\_update v pattern2 t case2 evd M in \\
\>     let n = get\_number\_of\_declaration p v in \\
\>      (D n THENL [ RenameVar x n THEN  evdProofTac M1 evd1\\
\>                 ; RenameVar y n THEN  evdProofTac M2 evd2\\
\>                 ]) p\\
\>  else if op = `callbyvalue`  then\\
\>     let kind, arg, ([x], B) = dest\_callbyvalue t in\\
\>     let B' = subst [x, arg] B in\\
\>     evdProofTac M (replace\_subterm t B' evd ) p\\
\>  else  apply\_case t M evd p\\
  \end{program*}
  \end{minipage}
  \end{center}
  \caption{Code for non-canonical evidence}
  \label{fig:nccase}
\end{figure}
The code for the non-canonical case comes from the rules in
Figure~\ref{fig:noncanonical}. In these cases we use an elimination
rule, indicated by the fact that the tactic calls on {\tt D n} where
{\tt n} is the hypothesis number for the declaration of the
principal variable. The code for the apply case is shown in
Figure~\ref{fig:applycase}. When the type of the declared variable
({\tt T = h n p}) is an implies we use the rule
\TirName{$\Rightarrow$apply} that adds a constraint that the
declared function is a constant function. In this implementation we
substitute the constant function for the variable and eliminate it
entirely. We can prove that this results in behavior that is
equivalent to the derivation rules.
\begin{figure}[h!]
  \begin{center}
  \begin{minipage}[b]{\linewidth}
    \footnotesize
\begin{program*}
\>apply\_case t M evd p = \\
\>  let fun,arg = dest\_apply t in \\
\>  let v = dest\_variable fun in\\
\>  let n = get\_number\_of\_declaration p v in\\
\>  let T = h n p in\\
\>  if is\_term `implies` T  then\\
\>    let x = maybe\_new\_var `x' (declared\_vars p) in\\
\>    let pattern = mk\_lambda\_term `z' (mvt x) in \\
\>    let evd1, M' = do\_update v pattern t (mvt x) evd M in\\
\>    ((D n  THEN Fold `implies` n)\\
\>       THENL [ evdProofTac M arg\\
\>             ; RenameVar x (-1) THEN  evdProofTac M' evd1]) p \\
\>  else if is\_term `all` T then\\
\>    if is\_variable\_term arg then\\
\>      let w = dest\_variable arg in\\
\>      let ans = lookup M t in\\
\>      if isl ans then\\
\>        evdProofTac M (replace\_subterm t (outl ans) evd) p\\
\>      else\\
\>        let x = maybe\_new\_var `x' (declared\_vars p) in\\
\>        let evd1 = replace\_subterm t (mvt x) evd in\\
\>        let M' = (t , (mvt x)).M in\\
\>        (SimpleInstHyp arg n THENM \\
\>          (Seq [ RenameVar x (-1); evdProofTac M' evd1])) p\\
\>   else let evd' = replace\_subterm t (mk\_cbv\_ap fun arg) evd in\\
\>        evdProofTac M evd' p\\
\> else (AddDebugLabel `fun in apply has wrong type` p)\\
\end{program*}
  \end{minipage}
  \end{center}
  \caption{Code for apply case of non-canonical evidence}
  \label{fig:applycase}
\end{figure}

\newpage
\bibliographystyle{plain}
\bibliography{rc-fk-2}

\expandafter\ifx\csname url\endcsname\relax \errhelp{In order to use this
  version of rc.bib^^Jyou need to add \usepackage{hyperref}^^Jor
  \usepackage{url} to your LaTeX preamble!} \errmessage{Add
  \noexpand\usepackage{hyperref} or \noexpand\usepackage{url} to your LaTeX
  preamble!} \fi
  \def\rcbibmksfcmdt#1#2{\expandafter\ifx\csname#1\endcsname\relax\expandafter\def\csname#1\endcsname{\textsf{#2}}\fi}
  \def\rcbibmksfcmd#1{\rcbibmksfcmdt{#1}{#1}} \rcbibmksfcmdt{ACLt}{ACL2}
  \rcbibmksfcmd{Aldor} \rcbibmksfcmd{Analytica} \rcbibmksfcmd{ALF}
  \rcbibmksfcmd{Axiom} \rcbibmksfcmd{Coq} \rcbibmksfcmd{Elf}
  \rcbibmksfcmd{Ensemble} \rcbibmksfcmd{Ergo} \rcbibmksfcmd{HATS}
  \rcbibmksfcmd{HOL} \rcbibmksfcmd{Horus} \rcbibmksfcmd{Isabelle}
  \rcbibmksfcmd{JProver} \rcbibmksfcmd{LCF} \rcbibmksfcmd{LF}
  \rcbibmksfcmd{Magnus} \rcbibmksfcmd{Maple} \rcbibmksfcmd{Matlab}
  \rcbibmksfcmd{Mathematica} \rcbibmksfcmd{MathML} \rcbibmksfcmd{MetaPRL}
  \rcbibmksfcmd{Mizar} \rcbibmksfcmd{Mojave} \rcbibmksfcmd{Nuprl}
  \rcbibmksfcmd{OpenMath} \rcbibmksfcmd{Otter} \rcbibmksfcmd{POPLmark}
  \rcbibmksfcmd{Prolog} \rcbibmksfcmd{PVS} \rcbibmksfcmd{PRL}
  \rcbibmksfcmd{SETHEO} \rcbibmksfcmd{SPASS} \rcbibmksfcmd{TAME}
  \rcbibmksfcmd{TPS} \rcbibmksfcmd{Twelf}
\begin{thebibliography}{10}

\bibitem{ABC06}
Stuart Allen, Mark Bickford, Robert Constable, Richard Eaton, Christoph Kreitz,
  Lori Lorigo, and Evan Moran.
\newblock Innovations in computational type theory using {N}uprl.
\newblock {\em Journal of Applied Logic}, 4(4):428--469, 2006.

\bibitem{Art99b}
Sergei Artemov.
\newblock Uniform provability realization of intuitionistic logic, modality and
  lambda-terms.
\newblock {\em Electronic Notes on Theoretical Computer Science}, 23(1), 1999.
\newblock \url{http://www.elsevier.nl/entcs/}.

\bibitem{BC85}
J.~L. Bates and Robert~L. Constable.
\newblock Proofs as programs.
\newblock {\em {ACM} Transactions of Programming Language Systems},
  7(1):53--71, 1985.

\bibitem{BBS98}
H.~Benl, U.~Berger, H.~Schwichtenberg, et~al.
\newblock Proof theory at work: Program development in the {Minlog} system.
\newblock In W.~Bibel and P.~G. Schmitt, editors, {\em Automated Deduction},
  volume~II. Kluwer, 1998.

\bibitem{BYC04}
Yves Bertot and Pierre Cast{\'e}ran.
\newblock {\em Interactive Theorem Proving and Program Development; Coq'Art:
  The Calculus of Inductive Constructions}.
\newblock Texts in Theoretical Computer Science. Springer-Verlag, 2004.

\bibitem{Beth47}
Evert~W. Beth.
\newblock Semantical considerations on intuitionistic mathematics.
\newblock {\em Indagationes mathematicae}, 9:572 -- 577, 1947.

\bibitem{Beth57}
Evert~W. Beth.
\newblock Semantic construction of intuitionistic logic.
\newblock {\em Koninklijke Nederlandse Akademie van Wetenschappen,
  Mededelingen, Nieuwe Reeks}, 19 (11):357 -- 388, 1957.

\bibitem{BDN09}
Ana Bove, Peter Dybjer, and Ulf Norell.
\newblock A brief overview of {A}gda -- a functional language with dependent
  types.
\newblock In Christian Urban Makarius~Wenzel Stefan~Berghofer, Tobias~Nipkow,
  editor, {\em LNCS 5674, Theorem Proving in Higher Order Logics}, pages
  73--78. Springer, 2009.

\bibitem{Con71}
Robert~L. Constable.
\newblock Constructive mathematics and automatic program writers.
\newblock In {\em Proceedings of the IFIP Congress}, pages 229--233.
  North-Holland, 1971.

\bibitem{Con85e}
Robert~L. Constable.
\newblock The {S}emantics of {E}vidence ({C}ornell {T}echnical {R}eport
  {TR}85-684 also appeared as {A}ssigning {M}eaning to {P}roofs).
\newblock {\em Constructive Methods of Computing Science}, F55:63--91, 1989.

\bibitem{book-full}
Robert~L. Constable, Stuart~F. Allen, H.~M. Bromley, W.~R. Cleaveland, J.~F.
  Cremer, R.~W. Harper, Douglas~J. Howe, T.~B. Knoblock, N.~P. Mendler,
  P.~Panangaden, James~T. Sasaki, and Scott~F. Smith.
\newblock {\em Implementing Mathematics with the {\Nuprl} Proof Development
  System}.
\newblock Prentice-Hall, NJ, 1986.

\bibitem{CoqSmi95}
Thierry Coquand and Jan~M. Smith.
\newblock An application of constructive completeness.
\newblock In {\em In Proceedings of the Workshop TYPES '95}, pages 76--84.
  Springer-Verlag, 1995.

\bibitem{deS77}
H.C.M. de~Swart.
\newblock An intuitionistically plausible interpretation of intuitionistic
  logic.
\newblock {\em Journal of Symbolic Logic}, 42:564 -- 578, 1977.

\bibitem{Dumm77}
Michael Dummett.
\newblock {\em Elements of Intuitionism}.
\newblock Oxford Logic Series. Clarendon Press, 1977.

\bibitem{Fitt69}
Melvin Fitting.
\newblock {\em Intuitionistic model theory and forcing}.
\newblock North-Holland, Amsterdam, 1969.

\bibitem{Fre67}
Gottlob Frege.
\newblock Begriffsschrift, a formula language, modeled upon that for arithmetic
  for pure thought.
\newblock In J.~van Heijenoort, editor, {\em From Frege to G{\"o}del: {A}
  Source Book in Mathematical Logic, 1879--1931}, pages 1--82. Harvard
  University Press, Cambridge, MA, 1967.

\bibitem{Grz64b}
A.~Grzegorczyk.
\newblock A philosophically plausible interpretation of intuitionistic logic.
\newblock {\em Indag Math}, 26:596 -- 601, 1964.

\bibitem{HHP93}
Robert Harper, Furio Honsell, and Gordon Plotkin.
\newblock A framework for defining logics.
\newblock {\em Journal of the Association for Computing Machinery},
  40(1):143--184, January 1993.
\newblock A revised and expanded verion of '87 paper.

\bibitem{Bro75}
A.~Heyting, editor.
\newblock {\em L. E. J. Brouwer. Collected Works}, volume~1.
\newblock North-Holland, Amsterdam, 1975.
\newblock (see {O}n the foundations of mathematics 11-98.).

\bibitem{Hey34}
Arend Heyting.
\newblock {\em Mathematische Grundlagenforschung.
  Intuitionismus.Beweistheorie}.
\newblock Springer, Berlin, 1934.

\bibitem{Kle52}
S.~C. Kleene.
\newblock {\em Introduction to Metamathematics.}
\newblock D. Van Nostrand, Princeton, 1952.

\bibitem{Kle45}
S.C. Kleene.
\newblock On the interpretation of intuitionistic number theory.
\newblock {\em Journal of Symbolic Logic}, 10:109 -- 124, 1945.

\bibitem{KV65}
S.C. Kleene and R.E. Vesley.
\newblock {\em The Foundations of Intuitionistic Mathematics}.
\newblock North-Holland, Amsterdam, 1965.

\bibitem{Kol32}
A.~N. Kolmogorov.
\newblock Zur deutung der intuitionistischen logik.
\newblock {\em Mathematische Zeitschrift}, 35:58--65, 1932.

\bibitem{Kre62}
G.~Kreisel.
\newblock Weak completeness of intuitionistic predicate logic.
\newblock {\em Journal of Symbolic Logic}, 27:139--158, 1962.

\bibitem{Kre02}
Christoph Kreitz.
\newblock The {N}uprl {P}roof {D}evelopment {S}ystem, version 5, {R}eference
  {M}anual and {U}ser's {G}uide.
\newblock Cornell University, Ithaca, NY, 2002.

\bibitem{Kri65}
Saul~A. Kripke.
\newblock Semantical analysis of intuitionistic logic.
\newblock In {\em Formal Systems and recursive functions}, pages 92--130.
  North-Holland, Amsterdam, 1965.

\bibitem{Lau70}
H.~L{\"a}uchli.
\newblock An abstract notion of realizability for which intuitionistic
  predicate calculus is complete.
\newblock In {\em Intuitionism and Proof Theory}, pages 227--34. North-Holland,
  Amsterdam, 1970.

\bibitem{Lei85}
Daniel Leivant.
\newblock Intuitionistic formal systems.
\newblock In {\em Harvey Friedman's Research on the Foundations of
  Mathematics}, Studies in Logic vol 117.

\bibitem{LO96}
James Lipton and Michael~J. O'Donnell.
\newblock Some intuitions behind realizability semantics for constructive
  logic: Tableau and {L}{\"a}uchli countermodels.
\newblock {\em Annals of Pure and Applied Logic}, 81:187 -- 239, September,
  1996.

\bibitem{ML70}
Per Martin-L{\"o}f.
\newblock {\em Notes on Constructive Mathematics}.
\newblock Almqvist \& Wiksell, Uppsala, 1970.

\bibitem{ML82}
Per Martin-L{\"{o}}f.
\newblock Constructive mathematics and computer programming.
\newblock In {\em Proceedings of the Sixth International Congress for Logic,
  Methodology, and Philosophy of Science}, pages 153--175, Amsterdam, 1982.
  North Holland.

\bibitem{ML84}
Per Martin-L{\"o}f.
\newblock {\em Intuitionistic Type Theory}.
\newblock Number~1 in Studies in Proof Theory, Lecture Notes. Bibliopolis,
  Napoli, 1984.

\bibitem{ML98}
Per Martin-L{\"o}f.
\newblock An intuitionistic theory of types.
\newblock In Giovanni Sambin and Jan~M. Smith, editors, {\em Twenty-Five Years
  of Constructive Type Theory}, volume~36 of {\em Oxford Logic Guides}, pages
  127--172, Oxford, 1998. Clarendon Press.

\bibitem{McC96}
David McCarty.
\newblock Undecidability and intuitionistic completeness.
\newblock {\em J. Philos Logic}, 25:559--565, 1996.

\bibitem{McC08}
David McCarty.
\newblock Completeness and incompleteness for intuitionistic logic.
\newblock {\em Journal of Symbolic Logic}, 73(4):1315--1327, 2008.

\bibitem{Mit96}
John~C. Mitchell.
\newblock {\em Foundations of Programming Languages}.
\newblock MIT Press, 1996.

\bibitem{NPS90}
Bengt Nordstr{\"o}m, Kent Petersson, and Jan~M. Smith.
\newblock {\em Programming in Martin-L{\"{o}}f's Type Theory}.
\newblock Oxford Sciences Publication, Oxford, 1990.

\bibitem{Pie02}
Benjamin~C. Pierce.
\newblock {\em Types and Programming Languages}.
\newblock MIT Press, 2002.

\bibitem{Pie11}
Benjamin~C. Pierce et~al.
\newblock {\em Software Foundations}.
\newblock Open Source, 2011.

\bibitem{Plo77}
Gordon~D. Plotkin.
\newblock {\LCF} considered as a programming language.
\newblock {\em Journal of Theoretical Computer Science}, 5:223--255, 1977.

\bibitem{Plo81}
Gordon~D. Plotkin.
\newblock A structural approach to operational semantics.
\newblock Technical Report DAIMI-FN-19, Aarhus University, Aarhus University,
  Computer Science Department, Denmark, 1981.

\bibitem{RS63}
H.~Raisowa and R.~Sikorski.
\newblock {\em The Mathematics of Metamathematics}.
\newblock Panstowe Wydawnictwo Naukowe, Warsaw, 1963.

\bibitem{Rus08}
Bertrand Russell.
\newblock {\em The Principles of Mathematics}.
\newblock Cambridge University Press, Cambridge, 1908.

\bibitem{Smu68}
R.~M. Smullyan.
\newblock {\em First--Order Logic}.
\newblock Springer-Verlag, also Dover, New York, 1995, New York, 1968.

\bibitem{Tro98b}
A.~S.Troelstra.
\newblock Realizability.
\newblock In S.R. Buss, editor, {\em Handbook of Proof Theory}, volume 137 of
  {\em Studies in Logic and the Foundations of Mathematics}, pages 407--473.
  Elsevier, 1998.

\bibitem{Tar38}
A.~Tarski.
\newblock Der aussangenkalkul und die topologie.
\newblock {\em Fundamenta Mathematicae}, 31:103--134, 1938.

\bibitem{Tro91}
Anne~Sjerp Troelstra.
\newblock History of constructivism in the 20th century.
\newblock {\em ITLI Publication Series}, ML-91-05:1--32, 1991.

\bibitem{TvD88}
A.S. Troelstra and D.~van Dalen.
\newblock {\em Constructivism in Mathematics, An Introduction}, volume I, II.
\newblock North-Holland, Amsterdam, 1988.

\bibitem{vSt90}
Walter~P. van Stigt.
\newblock {\em Brouwer's Intuitionism}.
\newblock North-Holland, Amsterdam, 1990.

\bibitem{Veld76}
W.~Veldman.
\newblock An intuitionistic completeness theorem for intuitionistic predicate
  calculus.
\newblock {\em Journal of Symbolic Logic}, 41:159--166, 1976.

\end{thebibliography}

\end{document}